\newcommand\numberthis{\addtocounter{equation}{1}\tag{\theequation}}
\theoremstyle{plain}
\newtheorem{theorem}{Theorem}[section]
\newtheorem{proposition}[theorem]{Proposition}
\newtheorem{fact}[theorem]{Fact}
\newtheorem{lemma}[theorem]{Lemma}
\newtheorem{claim}[theorem]{Claim}
\newtheorem{definition}[theorem]{Definition}
\newcommand{\eps}{\varepsilon}
\newcommand{\poly}{\mathrm{poly}}
\newcommand{\I}[1]{\mathbb{I}\left[#1\right]}
\newcommand{\dist}{\text{dist}}
\newcommand{\OPT}{\mathrm{OPT}}
\newcommand{\R}{\mathbb{R}}
\newcommand{\norm}[1]{\left\lVert#1\right\rVert}
\newcommand{\inner}[1]{\left\langle #1 \right\rangle}
\newcommand{\proj}{\boldsymbol{\Pi}}
\newcommand{\projx}{\proj_{\xset}}
\newcommand{\projy}{\proj_{\yset}}
\newcommand{\defeq}{:=}
\newcommand{\diag}[1]{\textbf{\textup{diag}}\left(#1\right)}
\newcommand{\x}{^\mathsf{x}}
\newcommand{\y}{^\mathsf{y}}
\newcommand{\xset}{\mathcal{X}}
\newcommand{\yset}{\mathcal{Y}}
\newcommand{\Par}[1]{\left(#1\right)}
\newcommand{\fC}{\mathcal{C}}
\newcommand{\algmargin}{\the\ALG@thistlm}
\algnewcommand{\parState}[1]{\State%
    \parbox[t]{\dimexpr\linewidth-\algmargin}{\strut\hangindent=\algorithmicindent \hangafter=1 #1\strut}}
\newcommand*{\alert}[1]{}
\newcommand{\goran}[1]{}
\newcommand{\christoph}[1]{}
\newcommand{\rasmus}[1]{}
\title{Acceleration for Distributed Transshipment and Parallel Maximum Flow}
\author{Christoph Grunau, Rasmus Kyng, Goran Zuzic}
\date{\today}
\begin{document}
\maketitle
\thispagestyle{empty}

\begin{abstract}
We combine several recent advancements to solve $(1+\varepsilon)$-transshipment and $(1+\varepsilon)$-maximum flow with a parallel algorithm with $\tilde{O}(1/\varepsilon)$ depth and $\tilde{O}(m/\varepsilon)$ work.
We achieve this by developing and deploying suitable parallel linear cost approximators in conjunction with an accelerated continuous optimization framework known as the box-simplex game by Jambulapati et al. (ICALP 2022).
A linear cost approximator is a linear operator that allows us to efficiently estimate the cost of the optimal solution to a given routing problem.
Obtaining accelerated $\varepsilon$ dependencies for both problems requires developing a stronger multicommodity cost approximator, one where cancellations between different commodities are disallowed. For maximum flow, we observe that a recent linear cost approximator due to Agarwal et al. (SODA 2024) can be augmented with additional parallel operations and achieve $\varepsilon^{-1}$ dependency via the box-simplex game.

\smallskip

For transshipment, we also construct a deterministic and distributed approximator. This yields a deterministic CONGEST algorithm that requires $\tilde{O}(\varepsilon^{-1}(D + \sqrt{n}))$ rounds on general networks of hop diameter $D$ and $\tilde{O}(\varepsilon^{-1}D)$ rounds on minor-free networks to compute a $(1+\varepsilon)$-approximation.
\end{abstract}

\thispagestyle{empty}

\newpage

\tableofcontents
\thispagestyle{empty}
\setcounter{page}{0}

\newpage

\section{Introduction}

The maximum flow and the shortest path problem are both fundamental tasks in algorithmic graph theory, serving as benchmarks for developing and evaluating algorithms. Solving these problems efficiently is critical for optimizing resources and operations in computer networks, logistics, and other systems. Their study leads to the development of key algorithm design techniques that are broadly applicable in computer science.

In this paper, we consider the $(1+\eps)$-approximate versions of these problems on undirected graphs in the scalable parallel and distributed settings.
By an approximate maximum flow, we mean a flow that routes given demands, while obtaining \emph{congestion} within $(1+\eps)$ of the minimum possible. Here the congestion of the flow is the maximum ratio between edge flow and edge capacity across all edges.
We solve a generalization of the $s$-$t$ shortest path problem known as \emph{transshipment},
where a flow is sought that routes given demands, while minimizing the sum of edge flows weighted by edge length, and we find a flow that minimizes quantity to within a factor $(1+\eps)$.
\rasmus{do we say something about rounding? can I call transshipment a generalization of shortest paths?}

Prior work has established scalable algorithms for these problems:
For maximum flow, very recent work~\cite{agarwal2024parallel} gives a parallel $\tilde{O}(\eps^{-3})$ depth and $\tilde{O}(\eps^{-3} m)$ work $(1+\eps)$-approximate algorithm, where $n := V(G), m := E(G)$ are the number of nodes and edges of the graph.
\cite{2022sssp} gives a parallel $\tilde{O}(\eps^{-2})$ depth and $\tilde{O}(\eps^{-2} m)$ work algorithm\footnote{We use $\sim$, as in $\tilde{O}$, to hide $\poly(\log n)$ factors, where $n$ is the number of nodes in the graph.} for $(1+\eps)$-approximate transshipment.
We strengthen these results, giving the first parallel $\tilde{O}(\eps^{-1})$ depth and $\tilde{O}(\eps^{-1} m)$ work algorithms for undirected maximum flow and transshipment.
For maximum flow, we obtain the following result:

\begin{restatable}{theorem}{thmMaxflowPram}
\label{thm:maxflow-pram}
There exists a randomized PRAM algorithm that uses $\tilde{O}(\eps^{-1})$ depth and $\tilde{O}(\eps^{-1} m)$ work to solve $(1+\eps)$-approximate undirected maximum flow. The algorithm computes feasible primal $f \in \R^E$ and dual $\phi \in \R^V$ solutions whose objectives are $(1+\eps)$-approximately optimal.
The algorithm succeeds with high probability.
\end{restatable}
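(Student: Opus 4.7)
The plan is to formulate $(1+\eps)$-maximum flow as a box-simplex game, apply the accelerated saddle-point solver of Jambulapati et al.\ for $\tilde{O}(1/\eps)$ iterations, and implement each iteration in $\tilde{O}(1)$ depth and $\tilde{O}(m)$ work using the Agarwal et al.\ cost approximator, augmented with suitable parallel primitives.

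First I would cast $\min_{f:\, Bf = d} \|f/c\|_\infty$ as a saddle point via $\|z\|_\infty = \max_{\|y\|_1 \le 1} y^\top z$. Enforcing $Bf = d$ directly is inefficient, so I would precondition with a \emph{linear cost approximator}: a linear operator $R$ satisfying $\OPT \le \|Rd\|_1 \le \kappa\cdot \OPT$ for some quality $\kappa = \tilde{O}(1)$. Writing $f = f_0 + \Delta f$ for an initial demand-feasible flow $f_0$ and reparameterizing cycle-space corrections through $R$ turns the problem into the box-simplex form $\min_{x \in [-1,1]^{n'}}\max_{y \in \Delta^{m'}} y^\top A x - b^\top y + c^\top x$ of Jambulapati et al., whose game width is controlled by $\kappa$.

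Next I would invoke the accelerated box-simplex solver, which returns an $\eps$-approximate saddle point in $\tilde{O}(\kappa/\eps) = \tilde{O}(1/\eps)$ iterations. Each iteration is dominated by one application of $A$ and of $A^\top$---equivalently, one forward and one transpose evaluation of $R$ together with $O(m)$ elementwise updates and one multiplication by $B$ or $B^\top$. The algorithm naturally produces both an approximate primal flow $f$ and approximate dual potentials $\phi$; I would finish by rounding the $O(\eps)$-residual demand along an auxiliary low-stretch tree to obtain a feasible $f$ with congestion within $(1+O(\eps))$ of optimal, and rescale $\eps$ by a constant.

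The main obstacle is implementing both $R$ and $R^\top$ in $\tilde{O}(1)$ depth and $\tilde{O}(m)$ work. The Agarwal et al.\ approximator composes low-diameter decompositions, hierarchical tree embeddings, and sparse neighborhood covers; while each ingredient parallelizes individually, the recursive composition must be reorganized level by level using parallel primitives such as Euler tours and ancestor-prefix scans so that the total depth remains polylogarithmic rather than scaling with the recursion depth. The forward operator is a routing on this hierarchy, and the transpose is a congestion accumulation that demands the same primitives; verifying that both admit such implementations and that their composition retains width $\tilde{O}(1)$ when plugged into the box-simplex game is the crux. Multiplying $\tilde{O}(1/\eps)$ iterations by $\tilde{O}(m)$ per-iteration work then yields the claimed bounds, with high-probability success inherited from the randomized decompositions inside $R$.
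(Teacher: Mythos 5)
Your proposal omits the central technical difficulty that the paper is built around: the accelerated box-simplex solver of Jambulapati et al.\ requires, in each iteration, not only matrix-vector products with $A := RBW^{-1}$ and $A^\top$, but also with the entrywise absolute value $|A|$ and $|A|^\top$ (these appear in the proximal/extragradient steps of the algorithm). Being able to evaluate $R$ and $R^\top$ as black-box linear operators lets you compute $Ax = R(B(W^{-1}x))$, but it does \emph{not} let you compute $|A|x = |RBW^{-1}|x$, since taking absolute values does not commute with the factorization. Your write-up says each iteration is ``dominated by one application of $A$ and of $A^\top$---equivalently, one forward and one transpose evaluation of $R$,'' which silently drops the $|A|$, $|A|^\top$ products; with only those operations you get the unaccelerated $\tilde{O}(\eps^{-2})$ rate of Sherman-type methods, not the claimed $\tilde{O}(\eps^{-1})$.

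Relatedly, you identify the crux as re-organizing the hierarchical recursion in the Agarwal et al.\ approximator via Euler tours and prefix scans so that $R$ and $R^\top$ run in polylog depth. That is not where the difficulty lies: Agarwal et al.\ already supply a parallel cost approximator. The paper's actual observation is far simpler and is what makes the $|A|$ products tractable: the RST-tree-based approximator $R$ is \emph{column-sparse} (each column has $\tilde{O}(1)$ nonzeros, since a vertex contributes only along its root-path in an $O(\log n)$-height tree). Since $R$, $B$, and $W^{-1}$ are all column-sparse and column-sparsity is closed under multiplication, one can \emph{explicitly} materialize $A = RBW^{-1}$ with $\tilde{O}(m)$ work, and then $|A|x$, $|A|^\top x$ are immediate. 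Without noticing column sparsity, your plan does not close this gap. A second, smaller issue: to repair the $O(\eps)$-residual demand you propose routing along a low-stretch tree; for $\ell_\infty$ congestion this can be arbitrarily bad, and the paper instead reroutes the residual with a constant-factor parallel max-flow approximation. Finally, you also skip the binary search over the guess $t$ of $\OPT$ (needed because the box-simplex formulation is a feasibility-style game) and the separate run at $t' = (1-\eps)t$ used to extract a feasible dual $\phi$; these are bookkeeping steps but are part of why the algorithm outputs both certifying primal and dual solutions.
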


For transshipment, we give a stronger result: We develop a parallel algorithm which is deterministic and
extends to the distributed setting.
\begin{restatable}{theorem}{transshipmentdist}\label{thm:transshipment-distrib}
  There exists an algorithm that computes feasible primal and dual solutions whose objectives are $(1+\eps)$-approximately optimal in the following settings:
  \begin{itemize}
  \item In the deterministic PRAM setting with $\tilde{O}(\eps^{-1})$ depth and $\tilde{O}(\eps^{-1} m)$.
  \item In the deterministic CONGEST setting with $\tilde{O}(\eps^{-1} (D_G+\sqrt{n}) )$ rounds on the network $G$, where $D_G$ is the hop diameter of $G$. The round complexity can be improved to $\tilde{O}(\eps^{-1}D_G)$ rounds if $G$ comes from a fixed minor-free family.
  \item In the randomized HYBRID setting with $\tilde{O}(\eps^{-1})$ rounds.
  \end{itemize}
\end{restatable}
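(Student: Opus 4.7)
My plan is to realize transshipment as an instance of the box-simplex game of Jambulapati et al., use the cost-approximator paradigm to obtain an iteration complexity of $\tilde{O}(\eps^{-1})$, and then implement each iteration within the round/work budgets of the three settings. The acceleration is ``black-box'' from the box-simplex framework, so the substance of the proof lies in constructing and efficiently deploying a \emph{deterministic, distributed} linear cost approximator with a no-cancellation guarantee.

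\textbf{Step 1: Reduction to a box-simplex game.} Writing transshipment as $\min_{f}\{\|Wf\|_1 : B^\top f = b\}$ with $W = \diag(w)$ and $B$ the signed incidence matrix, duality gives $\max_\phi\{ b^\top\phi : \|W^{-1}B\phi\|_\infty \le 1\}$. Given a linear cost approximator $R$ with $\|Rb\|_1 = \tilde{\Theta}(\OPT(b))$ and an associated right-inverse $R^+$ that maps demands to feasible flows, I would precondition the problem by $R$ and write it as $\min_{x \in [-1,1]^E} \max_{y \in \Delta^V} y^\top A x + c^\top x + d^\top y$ in the box-simplex form. The JLST analysis then yields a primal-dual pair with duality gap $\eps\cdot\OPT$ in $\tilde{O}(\eps^{-1})$ iterations, \emph{provided} $R$ has polylogarithmic distortion and, crucially, is a \emph{no-cancellation} multicommodity approximator, so that the box constraint is not loosened by destructive interference between commodities.

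\textbf{Step 2: A deterministic, distributed no-cancellation approximator.} The main technical work is to build $R$ (and an efficient $R^+$) that (i) is deterministic, (ii) is computable in $\tilde{O}(\eps^{-1}(D_G+\sqrt{n}))$ CONGEST rounds on general graphs and $\tilde{O}(\eps^{-1}D_G)$ rounds on minor-free graphs, and (iii) admits forward/backward application in $\tilde{O}(D_G+\sqrt{n})$ (resp.\ $\tilde{O}(D_G)$) rounds per application. I would construct $R$ from a hierarchy of deterministic low-diameter, low-stretch trees, routed using a portal/sketch structure (e.g., constructed from a deterministic neighborhood cover or, in the minor-free case, a low-diameter separator hierarchy). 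To enforce no-cancellation, I would route each ``commodity layer'' of the demand independently along its trees and take $\ell_1$-aggregations rather than signed sums, paying only polylogarithmic distortion. The same object also yields $R^+$ by walking demands back along the tree hierarchy. In the HYBRID model the global clique collapses distances so the same construction runs in $\tilde{O}(1)$ rounds per approximator application.

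\textbf{Step 3: Per-iteration cost and combination.} Each iteration of the box-simplex game is a constant number of applications of $R$, $R^\top$, $B$, $B^\top$, and coordinate updates on $\R^E$ and $\R^V$. Incidence multiplications are $\tilde{O}(1)$-depth, $\tilde{O}(m)$-work in PRAM and $O(1)$ rounds in CONGEST/HYBRID, so each iteration has cost dominated by one approximator application. Multiplying by the $\tilde{O}(\eps^{-1})$ iteration count produces exactly the three round/work bounds claimed. Feasibility of the output primal flow and dual potentials is obtained by the standard post-processing in the cost-approximator framework: round the approximate primal to exactly satisfy $B^\top f = b$ using $R^+$ (the rounding error is absorbed into $\eps$), and clip the dual to the box $\|W^{-1}B\phi\|_\infty \le 1$; both operations are one approximator application and preserve the $(1+\eps)$ guarantee.

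\textbf{Expected main obstacle.} The crux is Step 2: prior parallel/distributed transshipment approximators use randomization (e.g., random tree embeddings) or allow signed cancellations and therefore convert only to $\tilde{O}(\eps^{-2})$ via the unaccelerated framework. Producing a single object that is deterministic, no-cancellation, distributively constructible within $\tilde{O}(D_G+\sqrt{n})$ (and $\tilde{O}(D_G)$ in minor-free graphs), and cheap to apply repeatedly is where nearly all of the work goes; once it exists, plugging it into the box-simplex game is mechanical.
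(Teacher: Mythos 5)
The proposal has the right outer structure (precondition transshipment by a cost approximator $R$, solve the resulting box-simplex game in $\tilde{O}(\eps^{-1})$ iterations, implement each iteration distributedly), but it misidentifies both what property $R$ must have and why, and Step~3's per-iteration cost analysis would not go through as written. You claim a ``no-cancellation'' property is needed so that ``the box constraint is not loosened,'' and that each iteration is ``a constant number of applications of $R$, $R^\top$, $B$, $B^\top$.'' Neither is accurate. The $\tilde{O}(\eps^{-1})$ iteration count depends only on $\norm{A}_{1\to 1}$ with $A := RBW^{-1}$, which is bounded by the distortion of $R$ alone; no extra structural property of $R$ enters that bound. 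The real obstruction is per-iteration: the accelerated box-simplex solver additionally requires matrix-vector products with the entrywise absolute-value matrices $|A|$ and $|A|^\top$. If $R$ is dense but factors as $R = R_1 R_2$ with sparse factors (as for the oblivious routing of \cite{2022sssp}), then $Ax$ is cheap but $|A|x$ is not, because $|R_1R_2BW^{-1}| \neq |R_1|\,|R_2|\,|B|\,|W^{-1}|$ in general. This is exactly what confined prior distributed work to the unaccelerated $\tilde{O}(\eps^{-2})$ regime.

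The load-bearing observation that your proposal lacks is that it suffices for $R$ to be \emph{column-sparse} ($\poly(\log n)$ nonzeros per column): then $A$ is column-sparse, can be explicitly materialized entry by entry, and $|A|x$, $|A|^\top x$ become trivial. The paper obtains such an $R$ not from a new tree/portal hierarchy but by a light modification of the distance-structure approximator of \cite{2022sssp}: replace the dense path-to-edge-flow map $R_1$ with a diagonal path-to-length map, so each column of $R$ records only $\tilde{O}(1)$ scale-weighted cluster contributions rather than a full flow. Your ``$\ell_1$-aggregations rather than signed sums'' gestures at the same effect, but absent the column-sparsity formulation it is unclear your Step~2 object even supports computing $|A|x$, let alone within the stated round budgets. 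Two smaller issues: the $\eps^{-1}$ factor on the \emph{construction} cost of $R$ in Step~2(ii) should not appear, as the approximator is $\eps$-independent and built once; and the paper routes all three target settings through the Minor-Aggregation interface and its simulation theorems, rather than analyzing PRAM/CONGEST/HYBRID separately, which is what makes the $\tilde{O}(D_G+\sqrt{n})$, minor-free $\tilde{O}(D_G)$, and HYBRID $\tilde{O}(1)$ bounds fall out uniformly from one algorithm.
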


Key ingredients in recent algorithms parallelizing maximum flow and transshipment \cite{ASZ20,Li20,agarwal2024parallel} and earlier sequential algorithms \cite{She13,KLOS14,Pen16,She17,S17}
for these flow problems are
(1) a first-order continuous optimization approach and (2) a \emph{preconditioner} for the first-order method, based on so-called linear congestion or cost approximators (matrices that can be used to approximate the optimal solution).
This is also true of works solving transshipment in the distributed setting \cite{goranci2022universally, zuzic2021simple, 2022sssp}

Our result on maximum flow is conceptually simple. We show that the recent parallel cost approximator algorithm for maximum flow due to \cite{agarwal2024parallel} can be adapted for use in the accelerated first-order method of \cite{jambulapati2023revisiting}, i.e. the \emph{box-simplex game framework}.
Compared to unaccelerated first-order methods, the box-simplex game requires access to a slightly-strengthened set of operations over the cost approximator, and we show that this can be implemented in a very simple manner in parallel.

To give a randomized algorithm for parallel transshipment, we could (but choose not to) follow the same approach, and show that the cost approximator of \cite{Li20} allows us to efficiently and in parallel implement the operations required for the associated box-simplex game.
We instead prove a stronger result for transshipment, showing that we can implement the box-simplex game with a deterministic and distributed algorithm. To achieve this, we need to modify the linear cost approximator introduced in \cite{2022sssp}, as the existing construction does not allow us to efficiently parallelize the box-simplex game operations.




Cost approximators have been constructed in prior works, yet their integration into the box-simplex framework for parallel and distributed settings remains unexplored. This is likely due complex machinery involved. Our work aims to benefit the parallel and distributed computing community by demonstrating the use of these tools. Additionally, the community has begun to build upon these unaccelerated results \cite{2022sssp,agarwal2024parallel}, and our contributions can serve as a black-box enhancement for many of these methods \cite{fox2023simple, schneider2023near}.

\textbf{Transshipment and shortest path.} While the shortest path problem in the sequential setting has a simple and exact solution, the task becomes  harder in the parallel setting. In this paper we consider the $(1+\eps)$-approximate, undirected, and weighted version of the problem in the parallel setting and improve the state-of-the-art from $\tilde{O}(\eps^{-2} m)$ work and $\tilde{O}(\eps^{-2})$ depth to $\tilde{O}(\eps^{-1} m)$ work and $\tilde{O}(\eps^{-1})$ depth.

Essentially all modern approaches for the shortest path problem with near-optimal theoretical guarantees first solve the so-called \emph{transshipment problem}. Here, one is given an undirected weighted graph $G = (V, E, w)$ with \emph{edge weights} $w(e) > 0$ and a \emph{demand vector} $d \in \R^V$ where $d(v) > 0$ indicates a supply in $v$ and $d(v) < 0$ indicates a demand. It is guaranteed that $\sum_{v \in V} d(v) = 0$. The goal is to move the supply until its location matches the demand. See \Cref{fig:transshipment} for an example. Clearly, the transshipment generalizes the $s-t$ shortest path by setting $d = 1_s - 1_t \in \R^V$, where $1_x(v) = \I{x = v}$. For this reason, we focus on the transshipment problem directly instead of shortest path.
\alert{C: Maybe one should point out that approximate transshipment does not directly imply approximate s-t path or ssssp}

\begin{figure}[ht]
  \centering
  \begin{minipage}{.45\textwidth}
    \centering
    \begin{tikzpicture}[every node/.style={draw, circle}, >=latex]
      \node (A) at (45:2) {A};
      \node (B) at (90:2) {B};
      \node (C) at (135:2) {C};
      \node (D) at (180:2) {D};
      \node (E) at (225:2) {E};
      \node (F) at (270:2) {F};
      \node (G) at (315:2) {G};
      \node (H) at (0:2) {H};

      \draw (A) -- (B) -- (C) -- (D) -- (E) -- (F) -- (G) -- (H) -- (A);

      \node[draw=none] at (45:2.8) {d(A) = +2};
      \node[draw=none] at (192:3.0) {d(D) = +1};
      \node[draw=none] at (90:2.7) {d(B) = -1};
      \node[draw=none] at (315:2.9) {d(G) = -1};
      \node[draw=none] at (225:2.9) {d(E) = -1};

      \draw[->, very thick, blue] (A) -- (B);
      \draw[->, very thick, blue] (A) -- (H);
      \draw[->, very thick, blue] (H) -- (G);
      \draw[->, very thick, blue] (D) -- (E);
    \end{tikzpicture}
    \caption{Transshipment example: $G$ is a cycle with unit weights. There are 2 units of supply in A (indicated by a label $d(A) = +2$), and a unit of supply in D (indicated by $d(D) = +1$). There is a unit demand in B, G, and E (indicated by $d(\cdot) = -1$). The optimal flow is in blue and has a cost of 4.}
    \label{fig:transshipment}
  \end{minipage}%
  \hfill
  \begin{minipage}{.45\textwidth}
    \centering
    \begin{tikzpicture}[every node/.style={draw, circle}, >=latex]
      \node (A) at (45:2) {A};
      \node (B) at (90:2) {B};
      \node (C) at (135:2) {C};
      \node (D) at (180:2) {D};
      \node (E) at (225:2) {E};
      \node (F) at (270:2) {F};
      \node (G) at (315:2) {G};
      \node (H) at (0:2) {H};

      \draw (A) -- (B) -- (C) -- (D) -- (E) -- (F) -- (G) -- (H) -- (A);

      \node[draw=none] at (45:2.8) {d(A) = -2};
      \node[draw=none] at (90:2.7) {d(B) = +2};
      \node[draw=none] at (168:3.0) {d(D) = +1};
      \node[draw=none] at (225:2.9) {d(E) = -1};

      \draw[very thick, blue, ->] (D) -- (E) node[draw=none, midway, below, sloped, text=blue] {1};
      \draw[very thick, blue, ->] (B) -- (A) node[draw=none, midway, below, sloped, text=blue] {1.5};

      \draw[very thick, blue, <-] (A) to[bend right=45] (H) to[bend right=45] (G) to[bend right=45] (F) to[bend right=45] (E) to[bend right=45] (D) to[bend right=45] (C) to[bend right=45] (B) node[draw=none, midway, below, sloped, text=blue] {0.5};
    \end{tikzpicture}
    \caption{Maximum flow example: $G$ is a cycle with unit weights. The depicted flow sends $1$ unit from $D \to E$, $0.5$ units $B \to A$ along the small circle, and $1.5$ units $B \to A$ along the long circle. The solution has a cost of $1.5$ since, in our formulation, the most overcongested edge (either AB or DE) has congestion $1.5$. No better solution exists.}
    \label{fig:maxflow}
  \end{minipage}
\end{figure}

Formally, the transshipment problem, in the primal, can be formulated as finding a flow $f \in \R^E$ that satisfies the demand $d$ and has minimum cost. The dual can be formulated as finding the potentials $\phi \in \R^V$ that separates the positive and negative demand as much as possible while not overstretching edges. Formally, these can be written as follows.
\begin{align}
  \textbf{Primal:}~~ \min_{f \in \R^E}~ \norm{Wf}_1 : Bf = d, && \textbf{Dual:}~~ \max_{\phi \in \R^V}~ \inner{d, \phi} : \norm{W^{-1} B^\top \phi}_{\infty} \leq 1. \label{eq:TS-primal-dual}
\end{align}
Here, $W \in \R^{E \times E}$ is the diagonal matrix of edge weights (according to some fixed ordering $e_1, \ldots, e_m$) with $W_{e, e} = w(e)$. Furthermore, $B$ is the \emph{incidence matrix} of $G$ where the $i^{th}$ column corresponds to the $i^{th}$ edge $e_i$ with endpoints $a, b$, and the column is equal to $1_a - 1_b \in \mathbb{R}^V$. Note that the graph is undirected but with a fixed orientation of edges; a flow along the negative orientation is indicated with a negative value $f(e) < 0$.

\textbf{Maximum flow.} In our formulation of maximum flow, one is given a undirected and weighted graph $G = (V, E, w)$ where the capacity of an edge $e \in E$ is $1/w(e)$. In other words, the weight $w(e)$ corresponds to the inverse capacity. See \Cref{fig:maxflow} for an example. Furthermore, one is given a demand $d \in \mathbb{R}^V$ satisfying $\sum_{v \in V} d(v) = 0$ and the goal is to satisfy the demand $d$ with a flow $f$ that minimizes the most overcongested edge. Formally, this problem can be formulated as follows, reusing prior notation:
\begin{align}
  \textbf{Primal:}~~ \min_{f \in \R^E}~ \norm{Wf}_\infty : Bf = d, && \textbf{Dual:}~~ \max_{\phi \in \R^V}~ \inner{d, \phi} : \norm{W^{-1} B^\top \phi}_1 \leq 1. \label{eq:maxflow-primal-dual}
\end{align}

\textbf{Related work.}
Spielman and Teng \cite{ST04} developed a nearly-linear time algorithm
for solving the electrical flow problem, and ushered in an era of
using continuous optimization methods for graph algorithms and flow
problems in particular.
Daitch and Spielman used electrical flow solvers inside interior point
methods to develop algorithms for maximum and minimum-cost flow,
establishing the relevance of these methods for classic combinatorial
flow problems, but without attaining near-linear running times.
Christiano et al. and others \cite{christiano2011electrical,LRS13} used lower accuracy continuous
optimization approaches, in particular first-order approaches such as
gradient descent and multiplicative weight methods, to obtain further
speed-ups.
The first almost-linear time algorithms for approximate maximum flow
on undirected graphs were obtained by Sherman and by Kelner et
al. \cite{She13,KLOS14}, by leveraging $\ell_\infty$-norm gradient
descent, combined with $\ell_\infty$-oblivious routing or congestion
approximators \cite{Racke08,Madry10}.
Congestion approximators and oblivious routings in this setting act as
\emph{preconditioners} that radically reduce the iteration count of
the first-order methods being deployed.
By combining this approach recursively with a fast algorithm for
computing congestion approximators \cite{RST14}, Peng \cite{Pen16}  gave the first
$\tilde{O}(m\poly(\eps^{-1})))$ time algorithm for $(1+\epsilon)$
approximate maximum flow on undirected graphs.
First-order approaches were later extended to transshipment by Sherman
\cite{S17}, yielding the first $\tilde{O}(m\poly(\eps^{-1})))$ time algorithm for $(1+\epsilon)$
approximate transshipment on undirected graphs.
Sherman \cite{She17} developed a more sophisticated first-order
optimization method, based on a powerful concept of \emph{area-convexity},
and used this to give the first $\tilde{O}(m\eps^{-1}))$ time algorithm for
undirected maximum flow.
This was refined into a first-order method for both undirected maximum
flow and transshipment with running time $\tilde{O}(m\eps^{-1})$
by Jambulapati and co-authors \cite{jambulapati2022regularized,assadi2022semi,jambulapati2023revisiting},
using a framework they dubbed \emph{box-simplex games}.
Importantly, these accelerated first-order methods place slightly
stronger requirements on the preconditioner (typically a
congestion approximator) than earlier methods.

In the parallel settings, \cite{ASZ20} and \cite{Li20} gave the first $\tilde{O}(m\poly(1/\eps)))$
work algorithm for transshipment with $\tilde{O}(\poly(1/\eps))$
depth, building on the non-accelerated first-order approach introduced
by \cite{S17}. \cite{2022sssp} made the algorithm deterministic in both parallel and distributed settings. Very recently, \cite{agarwal2024parallel} gave the first $\tilde{O}(m\poly(1/\eps)))$ work algorithm for undirected maximum flow with $\tilde{O}(\poly(1/\eps)))$
depth, introducing a clever approach to parallelizing \cite{Pen16},
and deploying an unaccelerated first-order method as in
\cite{She13,KLOS14,Pen16}.

We omit a discussion of the extensive work on fast sequential algorithms for maximum flow and transshipment in directed graphs \cite{DS08, M13, KLS20, vdBLLSSSW21}, which recently culminated in an $m^{1+o(1)}$ time algorithm exact for minimum-cost flow \cite{chen2022maximum}, a problem that generalizes maximum flow and transshipment. The first sublinear and exact parallel and distributed shortest path algorithms on directed graphs was developed by \cite{rozhovn2022parallel,cao2023parallel}.

\section{Technical Overview}

The crux of maximum flow and transshipment algorithms that use the box-simplex game is to be able to construct and efficiently evaluate the so-called \emph{linear cost approximators} $R$. These are matrices with the property that the matrix-vector product $R d$ (where $d$ is a demand) has a norm $\norm{R d}$ that always approximates the cost required to optimally route a demand $d$. The approximators essentially serve as a gradient during the optimization process and are the main object of focus point in this paper.

\begin{definition}[Linear Cost Approximator]\label{def:cost-approx}
  Fix a weighted graph $G$.
  \begin{itemize}
  \item (Transshipment) A matrix $R \in \R^{ROWS(R) \times V}$ is an $\alpha$-apx linear cost approximator for transshipment if for all demand vectors $d \in \R^V$ with $\sum_{v \in V(G)} d(v) = 0$ we have that $\OPT(d) \le \norm{R d}_1 \le \alpha \cdot \OPT(d)$. Here $\OPT(d)$ is the optimum transshipment solution for a demand $d$.
  \item (Maximum flow) Similarly, an $\alpha$-apx linear cost approximator for maximum flow is a matrix $R \in \R^{ROWS(R) \times V}$ for which $\OPT(d) \le \norm{R d}_\infty \le \alpha \cdot \OPT(d)$ for all $d$ with $\sum_{v \in V(G)} d(v) = 0$. Again, $\OPT(d)$ is the optimum maximum flow solution for a demand $d$.
  \end{itemize}
\end{definition}

At a high-level, the box-simplex game is an iterative process that produces an approximate solution in $\tilde{O}(\eps^{-1} \alpha)$ iterations by using some $\alpha$-apx linear cost approximator $R$. In each iteration, it evaluates matrix-vector products $Ax, A^Tx, |A|x$, and $|A|^Tx$ for arbitrary vectors $x$, where $A := R B W^{-1}$ and $|A|$ is the entry-wise absolute value. Prior work, notably, also uses linear cost approximators, but utilizies a different optimization process that only requires evaluating $Rx$ and $R^T x$, resulting in a degraded $\eps$ dependency.

The main challenge in obtaining a $\eps^{-1}$ dependency is evaluating $|A| x$ and $|A|^T x$: simply being able to efficiently evaluate $R$ and $R^T$ is sufficient to efficiently evaluate $A x = R ( B ( W^{-1} x) )$ but not $|A| x = |R B W^{-1}| x$. To give context, the transshipment approximators in \cite{goranci2022universally,2022sssp} construct a \emph{dense} matrix $R \in \R^{E \times V}$ that maps a demand $d$ into a proper flow $f = Rd$ which routes $d$ and has approximately-optimal cost $\norm{R d}_1 \approx \OPT(d)$. This is a valid approximator, and the key insight to efficiently evaluating the matrix-vector products $R x, R^T x$ is by noting that, while dense, $R$ can be factored into two sparse matrices $R = R_1 R_2$. Here, both $R_1$ and $R_2$ have at most $\tilde{O}(m)$ entries (see remark in \cite[Sec 4.2 of the arXiv v2 version]{2022sssp}). This enables efficient evaluation of, say, $R x$ via $R_1 (R_2 x)$ in PRAM with $\tilde{O}(1)$ depth and $\tilde{O}(m)$ work. However, this does not give a direct way of evaluating $|R B W^{-1}| x$ as no similar factorization exists due to the application of entry-wise absolute values.

We provide an alternative, albeit very simple, observation: it is sufficient that the linear cost approximator $R$ is \emph{column sparse}. This enables one to explicitly construct $R B W^{-1}$ since, under the assumption, all of $R, B, W^{-1}$ are column sparse and this property is preserved under matrix multiplication. After constructing $R B W^{-1}$, one can directly take their absolute values. We note that this observation has also been implicitly used in \cite{jambulapati2023revisiting}, albeit they only state the result for the sequential setting.

\begin{definition}
  A matrix $R$ is column sparse (with respect to a parameter $n$) if each column has $\poly(\log n)$ non-zero elements.
\end{definition}
\begin{fact}
  Given two column sparse matrices $A, B$, their product $AB$ is also column sparse.
\end{fact}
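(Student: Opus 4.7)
The plan is to prove this by a direct counting argument on the columns of the product. By assumption, both $A$ and $B$ have at most $k := \poly(\log n)$ nonzero entries in every column. I want to bound the number of nonzero entries in an arbitrary column of $AB$ by $k^2$, which is still $\poly(\log n)$.

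First I would fix a column index $j$ and write column $j$ of $AB$ as
\[
(AB)_{:,j} = \sum_{i} B_{ij}\, A_{:,i}.
\]
The support of $B_{:,j}$ has size at most $k$, so the sum on the right is effectively over a set $S_j \subseteq \{i : B_{ij} \neq 0\}$ with $|S_j| \le k$. Hence $(AB)_{:,j}$ is a linear combination of at most $k$ columns of $A$.

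Next I would use the column-sparsity of $A$: each $A_{:,i}$ has support of size at most $k$, so
\[
\supp\bigl((AB)_{:,j}\bigr) \subseteq \bigcup_{i \in S_j} \supp(A_{:,i}),
\]
and the right-hand side has cardinality at most $|S_j| \cdot k \le k^2$. Since $k = \poly(\log n)$ implies $k^2 = \poly(\log n)$, every column of $AB$ has at most $\poly(\log n)$ nonzero entries, which is the definition of column sparsity.

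There is no real obstacle here; the only thing worth noting is that the polylogarithmic class is closed under squaring, which is immediate. If one wanted to iterate this fact (for example to handle products of more than two column-sparse matrices), the exponent in the polylog would grow with the number of factors, but for a fixed constant number of factors the bound remains $\poly(\log n)$, which is all that is needed in the applications later in the paper.
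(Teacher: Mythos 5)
Your proof is correct, and it is the natural argument: write column $j$ of $AB$ as $\sum_i B_{ij} A_{:,i}$, note that at most $k := \poly(\log n)$ of the coefficients are nonzero, and conclude that the support has size at most $k^2 = \poly(\log n)$. The paper states this as a bare \textsf{Fact} with no proof, treating it as immediate; your write-up supplies exactly the short counting argument the authors left implicit, so there is no substantive difference to compare.
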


Furthermore, we observe that the parallel maximum flow linear cost approximator obtained in very recent prior work \cite{agarwal2024parallel} is indeed column sparse and directly leads to a $\eps^{-1}$ dependency (\Cref{sec:maximum-flow-box-simplex}). On the other hand, the deterministic distributed transshipment linear cost approximator is not: we slightly change and adapt the construction to obtain a column sparse approximator. (\Cref{sec:transshipment-box-simplex} and \Cref{sec:det-distrib-transshipment-main}).


\goran{write combinatorial interpretation if there is time about flow cancellation and whatnot}

\textbf{Notation and assumptions.} Graphs are assumed to be connected, undirected, and weighted by default. Each graph $G$ comes with a function $w_G : E \rightarrow \{1, \ldots, n^{O(1)}\}$ that assigns a polynomially-bounded non-negative weight to each edge in $E$. This weight function induces a distance function $\dist_G(u, v)$ on $G$. For two sets of nodes $A,B \subseteq V(G)$ we define $\dist_G(U,W) = \min_{u \in U, w \in W} \dist_G(u, w)$ and sometimes write $\dist_G(U, w)$ for $\dist_G(U, \{w\})$.

\subsection{Preliminaries: Box-simplex game}\label{sec:box-simplex-prelim}

The box-simplex game is a family of optimization problems of the following form~\cite{jambulapati2023revisiting}:
\begin{align}
  \min_{x \in [-1, 1]^n} \max_{y \in \Delta_d} x^T A y - \inner{b, y} + \inner{c, x}. \label{eq:box-simplex}
\end{align}
Here, $\Delta_d := \{ x \in \mathbb{R}^d \mid x \ge 0, 1_d^T x = 1$ is the probability simplex, and $(A \in \mathbb{R}^{n \times d}, b \in \mathbb{R}^d, c \in \mathbb{R}^n)$ are given as input. We note the parameters $n$ and $d$ in this section are distinct from their meaning in the rest of the paper (outside of \Cref{sec:box-simplex-prelim}) to keep them in sync with prior work.

The value to which \Cref{eq:box-simplex} evalutes is referred to as the \emph{box-simplex game value} or the \emph{min-max value}. The goal is to find an $\eps$-approximate saddle point $(\hat{x}, \hat{y})$.
\begin{definition}
  Let $f(x, y) := x^T A y - \inner{b, y} + \inner{c, x}$. An $\eps$-approximate saddle point is a pair $(\hat{x}, \hat{y})$ when for all $x \in [-1, 1]^n$ and all $y \in \mathbb{R}^d, \norm{y}_1 \le 1$ we have that $f(x, \bar{y}) - f(\bar{x}, y) \le \eps$.
\end{definition}

\begin{algorithm}[h]
  \caption{\textsf{BoxSimplex}($\mathbf{A}, b, c, \eps$)~\cite{jambulapati2023revisiting}}
  \label{alg:boxsimplex}
  \begin{algorithmic}[1]
    \State {\bfseries Input:} $\mathbf{A}\in\R^{n \times d}$ with $L \defeq \norm{\mathbf{A}}_{1 \to 1}$, desired accuracy $\epsilon \in (0, L)$ \;
    \State {\bfseries Output:} an $\eps$-approxmate saddle point $(\hat{x}, \hat{y})$ to \Cref{eq:box-simplex}.
    \State $\Pi_{\mathbf{X}}(x)$ truncates each coordinate of $x$ to $[-1, +1]$. Formally: $x_i \mapsto \max(-1, \min(1, x_i))$.
    \State $\Pi_{\mathbf{Y}}(y) := \frac{y}{\norm{y}_1}$.
    \State $|A|$ takes the entry-wise absolute values, $(x_t)^2$ represents element-wise squaring, $x_t \odot y_t$ represent element-wise multiplication between two same-sized vectors, and $\frac{x}{y}$ is element-wise division for two equal-sized vectors.

    \State Initialize $x_0 \gets 0_n$, $y_0 \gets \frac 1 d 1_d$, $\bar{y}_0 \gets \frac 1 d 1_d$, $\hat{x} \gets 0_n$, $\hat{y} \gets 0_d$, $T \gets \lceil\frac{6(8\log d + 1)L}{\eps} \rceil$, $\alpha \gets 2$, $\beta \gets 2$. \;
    \State Rescale $\mathbf{A} \gets \frac 1 L \mathbf{A}$, $b \gets \frac 1 L b$, $c \gets \frac 1 L c$\;\label{line:rescale}
    \For{$t=0$ {\bfseries{\textup{to}}} $T-1$}
    \State $(g\x_t, g\y_t) \gets \frac 1 3 (\mathbf{A} y_t + c, b - \mathbf{A}^\top x_t)$ \Comment{Gradient oracle start.}\;
    \State $x^\star_t \gets \projx\Par{-\frac{g\x_t - 2\diag{x_t}|\mathbf{A}|y_t}{2|\mathbf{A}|y_t}}$ \;
    \State $y'_t \gets \projy\Par{y_t \circ \exp\Par{-\frac 1 \beta (g\y_t + |\mathbf{A}|^\top (x^\star_t)^2 - |\mathbf{A}|^\top x_t^2)}}$\;
    \State $x'_t \gets \projx\Par{-\frac{g\x_t - 2\diag{x_t}|\mathbf{A}|y_t}{2|\mathbf{A}|y'_t}}$
    \State $(\hat{x}, \hat{y}) \gets (\hat{x}, \hat{y}) + \frac 1 T (x'_t, y'_t)$    \Comment{Running average maintenance.}\;
    \State $(g\x_t, g\y_t) \gets \frac 1 6 (\mathbf{A} y'_t + c, b - \mathbf{A}^\top x'_t)$  \Comment{Extragradient oracle start.}\;
    \State $\bar{x}^\star_t \gets \projx\Par{-\frac{g\x_t - 2\diag{x_t}|\mathbf{A}|y_t}{2|\mathbf{A}|\bar{y}_t}}$ \;
    \State $y_{t + 1} \gets \projy\Par{\bar{y}_t \circ \exp\Par{-\frac 1 \beta (g\y_t + |\mathbf{A}|^\top (\bar{x}^\star_t)^2 + \alpha\log \bar{y}_t - |\mathbf{A}|^\top x_t^2 - \alpha \log y_t) }}$
    \State $x_{t + 1} \gets \projx\Par{-\frac{g\x_t - 2\diag{x_t}|\mathbf{A}|y_{t}}{2|\mathbf{A}|y_{t + 1}}}$ \;
    \State $\bar{y}_{t + 1} \gets \projy\Par{y_t \circ \exp\Par{-\frac 1 \beta (g\y_t + |\mathbf{A}|^\top (x_{t + 1})^2 + \alpha\log y_{t + 1} - |\mathbf{A}|^\top x_t^2 - \alpha \log y_t) }}$
    \EndFor
    \State {\bfseries Return:} $(\hat{x}, \hat{y})$
  \end{algorithmic}
\end{algorithm}
\goran{DOUBLE CHECK: $\alpha = \beta = 2$ in the above algo !??!?!?!?}

\begin{theorem}\label{thm:boxsimplex-iters}
  \Cref{alg:boxsimplex} computes an $\eps$-approximate saddle point $(\hat{x}, \hat{y})$ in $T := \frac{\norm{A}_{1 \to 1} \log d}{\eps}$ iterations. Here, $\norm{A}_{1 \to 1} := \max_{v \neq 0} \norm{A v}_1 / \norm{v}_1$ is the 1-operator norm.
\end{theorem}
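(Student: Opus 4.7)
The plan is to cast Algorithm~\ref{alg:boxsimplex} as an instance of the mirror prox (extragradient) method applied with an area-convex regularizer tailored to the bilinear structure of the box-simplex objective, following the framework of area-convexity introduced by Sherman and refined in \cite{jambulapati2023revisiting}. After the rescaling on line~\ref{line:rescale} we may assume $\norm{A}_{1\to 1} = 1$, so it suffices to prove that $T = O(\epsilon^{-1} \log d)$ iterations produce an $\epsilon$-approximate saddle point; undoing the rescaling at the end replaces $\epsilon$ by $\epsilon/\norm{A}_{1\to 1}$ and yields the stated iteration count.

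The central object is the joint regularizer
\[
r(x, y) \defeq \sum_{i, j} |A_{ij}|\, y_j\, x_i^2 \;+\; \alpha \sum_{j} y_j \log y_j,
\]
which I would show is area-convex on $[-1,1]^n \times \Delta_d$ with constant $\beta = O(1)$ with respect to the monotone operator of $f(x,y) = x^\top A y - \inner{b,y} + \inner{c,x}$. The coupled quadratic term supplies $y$-weighted curvature in $x$, which is exactly what is needed to absorb the bilinear cross term $x^\top A y$, while the entropy term provides strong convexity on the simplex and controls the range at $O(\log d)$. The updates in the algorithm are then precisely the two Bregman proximal steps in $r$: the $y$-updates take the familiar entropic multiplicative form with projection onto $\Delta_d$ realized by $\projy$, and the $x$-updates reduce to a coordinatewise quadratic minimization with closed-form minimizer $\projx\Par{-\,g\x/(2|A|y)}$, the $\diag(x_t)|A|y_t$ correction accounting for the fact that $\nabla_x r$ depends on both $x_t$ and the reference $y$.

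The main steps of the argument are: (i) verify area-convexity of $r$ on $[-1,1]^n \times \Delta_d$ with the precise constants that match the hardcoded step sizes $\alpha = \beta = 2$ and the $\tfrac{1}{3},\tfrac{1}{6}$ coefficients on the gradient/extragradient oracles; (ii) invoke the generic regret bound for mirror prox with an area-convex regularizer, which gives a duality gap of $O(\Omega_r / T)$, where $\Omega_r$ bounds the range of $r$ over the feasible set; (iii) bound $\Omega_r = O(\log d)$ using $\norm{A}_{1\to 1} = 1$, $\norm{x}_\infty \le 1$ and $y \in \Delta_d$ (the quadratic part contributes at most $\sum_{i,j} |A_{ij}| y_j \le \max_j \sum_i |A_{ij}| \le \norm{A}_{1\to 1} = 1$, and the entropy contributes $\alpha \log d$); and (iv) unrescale $A$ to recover the $\norm{A}_{1\to 1}$ factor.

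The main obstacle is step (i): the area-convexity inequality is the technical heart of the argument and must hold with the exact numerical constants so that the coefficients $\alpha=\beta=2$, $\tfrac{1}{3}$, $\tfrac{1}{6}$ and the factor $2$ multiplying $|A|y$ all line up with the Bregman proximal updates written out in Algorithm~\ref{alg:boxsimplex}. Concretely, one must show that for all feasible $(x,y)$ and all tangent directions $(u,v)$, the Hessian quadratic form of $r$ dominates, up to the target constant, the off-diagonal coupling $|u^\top A v|$ induced by the bilinear objective; this reduces to a $2\times 2$ PSD check on a block matrix involving $|A|y$, $A$, and $\alpha\,\diag(1/y)$ after a natural change of variables that cancels the $y_j$ weights. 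This calculation is carried out in \cite{jambulapati2023revisiting}, so I would import their area-convexity lemma verbatim and devote the remainder of the proof to the mechanical derivation of the closed-form updates in Algorithm~\ref{alg:boxsimplex} from the Bregman proximal template and to the range bound in (iii), both of which are routine once area-convexity is in hand.
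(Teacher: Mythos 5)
The paper does not prove this theorem: it is stated in the preliminaries as an imported result, and both the theorem and Algorithm~\ref{alg:boxsimplex} carry the citation to \cite{jambulapati2023revisiting}. There is therefore no in-paper argument against which to compare your proposal. Your outline correctly recounts the structure of the argument in that reference: the area-convex joint regularizer $r(x,y) = \langle x^2, |A|y\rangle + \alpha \sum_j y_j \log y_j$ on $[-1,1]^n \times \Delta_d$, the mirror-prox/extragradient template whose Bregman proximal steps yield exactly the entropic $y$-update and the coordinatewise quadratic $x$-update with minimizer $\projx(-g^{\mathsf{x}}/(2|A|y))$, the duality-gap bound $O(\Omega_r/T)$, the range bound $\Omega_r = O(\log d)$ after normalizing $\norm{A}_{1\to 1} = 1$, and the final unrescaling. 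Your bound on the quadratic part ($\sum_{i,j}|A_{ij}|y_j x_i^2 \le \max_j \sum_i |A_{ij}| = \norm{A}_{1\to 1}$) and on the entropy part ($\alpha\log d$) is correct.

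That said, what you have written is a sketch, not a proof: step (i), the area-convexity inequality with the exact constants matching $\alpha=\beta=2$ and the $\tfrac13,\tfrac16$ oracle scalings, is the entire technical content of the theorem, and you explicitly defer it to \cite{jambulapati2023revisiting}. This is the same posture the paper takes (it defers the whole theorem), so there is no gap relative to the paper. But be aware that if you were asked to actually discharge the statement rather than cite it, you would need to carry out the $2\times 2$ PSD domination check you sketch — showing that for all tangent directions $(u,v)$ the Hessian quadratic form of $r$ dominates $|u^\top A v|$ up to the right constant — and then verify that substituting the closed-form Bregman steps really does reproduce every line of Algorithm~\ref{alg:boxsimplex}, including the running-average output and the extragradient half-step. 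None of that is present in the proposal.
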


Equivalent formulations of the box-simplex game, by minimizing over a single variable, are as follows.
\begin{proposition}
  Let $V$ be the game value of a specific box-simplex game specified by $(A, b, c)$. Then:
  \begin{align}
    V = - \min_{y \in \Delta_d} \big[ \norm{A y + c}_1 + \inner{b, y} \big] && \label{eq:maxsimplex} \\
    V = \min_{x \in [-1,1]^n} \big[ \max(A^T x - b) + \inner{c, x} \big] && \label{eq:minbox}
  \end{align}
  Moreover, subtituting $x$ and $y$ from an $\eps$-approximate saddle point $(x, y)$ into \Cref{eq:maxsimplex} and \Cref{eq:minbox} (resp.) yields values that are at least $V-\eps$ and at most $V+\eps$ (resp).
\end{proposition}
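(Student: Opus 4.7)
The plan is to reduce both identities to two elementary facts about extremizing linear functions: for $v \in \R^d$, $\max_{y \in \Delta_d} \inner{v,y} = \max_i v_i$, and for $u \in \R^n$, $\min_{x \in [-1,1]^n} \inner{u,x} = -\norm{u}_1$ (attained at $x_i = -\operatorname{sign}(u_i)$). For \Cref{eq:minbox} I would keep the outer $\min_{x}$ in the definition of $V$, rewrite $f(x,y) = \inner{A^\top x - b,\, y} + \inner{c,x}$ as a function of $y$, and apply the first fact to obtain $\max_{y \in \Delta_d} f(x,y) = \max(A^\top x - b) + \inner{c,x}$. For \Cref{eq:maxsimplex} I first invoke the minimax theorem to swap the order of optimization: since $f$ is bilinear plus linear and both $[-1,1]^n$ and $\Delta_d$ are compact and convex, Sion's theorem (or von Neumann's) gives $V = \max_{y \in \Delta_d}\min_{x \in [-1,1]^n} f(x,y)$. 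Rewriting $f(x,y) = \inner{Ay+c,\, x} - \inner{b,y}$ and applying the second fact then yields $V = \max_{y \in \Delta_d}\bigl[-\norm{Ay+c}_1 - \inner{b,y}\bigr]$, which rearranges to \Cref{eq:maxsimplex}.

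For the approximate-saddle-point addendum, I would combine the standard sandwich $\min_{x} f(x,\bar y) \le V \le \max_{y} f(\bar x, y)$ (valid because $V$ is the minimax value and the swap above holds) with the gap bound $\max_{y \in \Delta_d} f(\bar x, y) - \min_{x \in [-1,1]^n} f(x, \bar y) \le \eps$. The latter follows from the stated saddle-point condition by taking $\sup$ over $x \in [-1,1]^n$ first and then $\inf$ over $y \in \Delta_d$; the paper's definition quantifies $y$ over the larger set $\{\norm{y}_1 \le 1\} \supseteq \Delta_d$, so the $\inf$ over $\Delta_d$ only enlarges (never shrinks) the left-hand side, and the inequality carries through. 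Combining these facts gives $\max_{y} f(\bar x, y) \in [V, V+\eps]$ and $\min_{x} f(x, \bar y) \in [V-\eps, V]$. By the identities derived in the previous paragraph, $\max_{y \in \Delta_d} f(\bar x, y)$ is precisely the value obtained by substituting $\bar x$ into the bracket of \Cref{eq:minbox}, while $\min_{x \in [-1,1]^n} f(x, \bar y)$ equals the value obtained by substituting $\bar y$ into the right-hand side of \Cref{eq:maxsimplex}. So the substituted values land in $[V, V+\eps]$ and $[V-\eps, V]$ respectively, matching the claim.

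The only step that is not a direct computation is the minimax swap, but this is completely standard for bilinear-plus-linear objectives over compact convex domains, so I do not anticipate any real difficulty. The only small bookkeeping point is the mismatch between the paper's saddle-point definition (which quantifies the dual variable over $\{\norm{y}_1 \le 1\}$) and the $\Delta_d$ constraint native to the min-max game; this is handled by the monotonicity observation above, and otherwise the argument is a one-line application of the two linear-optimization identities.
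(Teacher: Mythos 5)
The paper does not prove this proposition directly; it simply cites Equation~1.2 of Assadi et al.\ and the surrounding discussion. Your proof is a self-contained derivation, and the two main identities are correct: both follow from $\max_{y\in\Delta_d}\inner{v,y}=\max_i v_i$ and $\min_{x\in[-1,1]^n}\inner{u,x}=-\norm{u}_1$, plus a minimax swap justified by Sion/von Neumann over the compact convex domains. This is a reasonable and more informative alternative to the paper's pure citation.

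One bookkeeping issue deserves a flag. The gap bound you need for the addendum is
\[
\max_{y\in\Delta_d} f(\hat x,y) \;-\; \min_{x\in[-1,1]^n} f(x,\hat y) \;\le\; \eps,
\]
which is obtained from the condition ``$f(\hat x,y)-f(x,\hat y)\le\eps$ for all $x,y$'' by taking $\sup$ over $y$ and then $\inf$ over $x$. You state the order as ``$\sup$ over $x$ first and then $\inf$ over $y$,'' and start from the paper's literal inequality ``$f(x,\hat y)-f(\hat x,y)\le\eps$''; that combination produces $\max_x f(x,\hat y) - \max_y f(\hat x,y)\le\eps$, which is a different (and not directly useful) quantity. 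The paper's Definition~2.2 appears to have the roles of $\hat x$ and $\hat y$ swapped (as written, setting $y=\hat y$ would force $\hat x$ to be a near-\emph{worst} response, which is not what a saddle point means), so the condition you actually need is the standard $f(\hat x,y)-f(x,\hat y)\le\eps$. With that correction and the corresponding order of quantifier exchange, your sandwich $\min_x f(x,\hat y)\le V\le \max_y f(\hat x,y)$ together with the gap bound delivers exactly the stated $[V-\eps,V]$ and $[V,V+\eps]$ ranges. The rest of the argument --- including the observation that $\Delta_d\subseteq\{\norm{y}_1\le 1\}$ only weakens the $\sup$ and so preserves the inequality, and the identification of the substituted brackets in \Cref{eq:maxsimplex}/\Cref{eq:minbox} with $\min_x f(x,\hat y)$ and $\max_y f(\hat x,y)$ --- is correct.
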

\begin{proof}
  Proven in Equation 1.2 of \cite{assadi2022semi} and the succeeding paragraph.
\end{proof}

Finally, we give an ``operational'' view of the box-simplex game optimizer described in \Cref{alg:boxsimplex}. Namely, the optimizer uses only 4 different classes of operations, as formalized in the following claim.
\begin{proposition}\label{prop:ops-from-boxsimplex}
  Each step of \Cref{alg:boxsimplex} can be implemented as $O(1)$ applications of the following operations:
  \begin{itemize}
  \item (O1) Initializing a vector in $\R^d$, $\R^n$, or $\R^1$ (i.e., scalar) to all zeros, or a value from the input.
  \item (O2) Applying an element-wise function $f : \R \to \R$ to each element of a previously computed vector.
  \item (O3) Computing dot products, or element-wise addition, multiplication, or division between two previously computed vectors of the same size.
  \item (O4) Computing matrix-vector products $A x$, $A^T x$, $|A| x$, $|A|^T x$ with a previously computed vector $x$ of the appropriate size. Here $|A|$ takes the element-wise absolute value.
  \end{itemize}
\end{proposition}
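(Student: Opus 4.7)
The plan is to prove \Cref{prop:ops-from-boxsimplex} by a line-by-line inspection of \Cref{alg:boxsimplex}, decomposing each arithmetic step into a constant number of the four primitive operations. The key observation is that every update in the box-simplex algorithm is built from three ingredients: matrix-vector products involving $\mathbf{A}$, $\mathbf{A}^T$, $|\mathbf{A}|$, and $|\mathbf{A}|^T$; element-wise applications of scalar functions such as $\exp$, $\log$, squaring, and the coordinate-wise truncation to $[-1,1]$; and standard vector arithmetic (sums, differences, Hadamard products and quotients, and inner products).

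First I would handle the preprocessing. The initialization on line 6 uses only (O1), as $x_0, \hat{x}, \hat{y}$ are all-zero and $y_0, \bar{y}_0$ are scalar multiples of $1_d$. The rescaling on line 7 requires $L = \|\mathbf{A}\|_{1 \to 1}$, which equals the maximum column $\ell_1$-norm of $\mathbf{A}$; this can be obtained as the largest entry of the vector $|\mathbf{A}|^T 1_n$, hence one (O4) invocation followed by an (O2)/(O3) reduction to take the maximum. The subsequent rescaling of $\mathbf{A}, b, c$ by $1/L$ is an (O3) step.

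Next I would walk through each iteration. The gradient and extragradient assignments on lines 9 and 13 each require one (O4) call to form $\mathbf{A} y$ or $\mathbf{A}^T x$, followed by an (O3) addition with $b$ or $c$ and an (O3) scaling by $1/3$ or $1/6$. Each box update (lines 10, 12, 14, 16) first computes $|\mathbf{A}| y$ via (O4), forms $\diag{x_t}|\mathbf{A}| y = x_t \circ (|\mathbf{A}| y)$ through (O3), and assembles the argument of $\projx$ using further (O3) additions and element-wise divisions; the projection $\projx$ itself, being coordinate-wise truncation to $[-1,1]$, is implemented by (O2). Each simplex update (lines 11, 15, 17) squares $x^\star_t$, $\bar{x}^\star_t$, or $x_{t+1}$ by (O2), applies (O4) to form $|\mathbf{A}|^T(\cdot)^2$, computes $\log y_t$ and $\log \bar{y}_t$ by (O2), combines the results via (O3) additions and scaling by $-1/\beta$, exponentiates via (O2), multiplies by $y_t$ or $\bar{y}_t$ entrywise through (O3), and finally normalizes by $\projy(v) = v/\|v\|_1$; since each such $v$ is nonnegative, $\|v\|_1 = \langle 1_d, v\rangle$ is an (O3) inner product and the division is a further (O3). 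The running-average maintenance on line 12 is a single (O3) step.

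The proof is essentially a bookkeeping exercise, so there is no substantive technical obstacle; the only mildly subtle point is recognizing that the simplex projection $\projy$ fits within (O2) and (O3) because every iterate on which it is applied is coordinate-wise nonnegative, so $\|v\|_1$ coincides with the inner product $\langle 1_d, v\rangle$. Collecting the counts over all lines shows that each iteration uses only $O(1)$ operations from each of the four classes, which yields the stated bound.
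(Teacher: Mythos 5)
Your proof is correct and takes exactly the same approach as the paper's, which argues ``by direct verification,'' gives a single worked example (the computation of $g_t^{\mathsf{x}}$ from one (O4) call, one (O1)+(O3) addition, and one (O2) scaling), and declares the remaining lines analogous; you simply carry out the full line-by-line pass explicitly, including the preprocessing and the simplex projection. The only nits are that a few of your line references are shifted by one relative to the displayed algorithm (e.g., the running-average update is line 13, not 12, and the extragradient assignment is line 14), and the nonnegativity observation for $\projy$ is not strictly needed since $\norm{v}_1$ can always be obtained by an (O2) absolute value followed by an (O3) inner product with $1_d$; neither affects correctness.
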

\begin{proof}
  By direct verification. For example, computing $g\x_t$ from line 10\goran{doublecheck if this is still true before submission} requires one to one to take the previously computed $y_t$, compute the matrix-vector product $A y_t$ (O4), add $c$ to it from the input (O1+O3), and multiply by $1/3$ (O2). Other operations are analogous.
\end{proof}

\subsection{Solving Maximum Flow with Box-Simplex}\label{sec:maximum-flow-box-simplex}

In this section we formally show how to construct an accelerated parallel algorithm that obtains feasible and approximate primal and solutions to the maximum flow via the box-simplex game.

First, we verify that indeed a column sparse $\tilde{O}(1)$-apx linear cost approximator was developed in \cite{agarwal2024parallel}.
\begin{lemma}\label{lemma:maxflow-routing}
  There exists a randomized PRAM algorithm with $\tilde{O}(1)$ depth and $\tilde{O}(m)$ work that, given a graph $G$, construct a $\poly(\log n)$-apx linear cost approximator that is also column sparse. The algorithm succeeds with high probability.
\end{lemma}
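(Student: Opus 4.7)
The plan is to take the parallel construction of the maximum-flow cost approximator from \cite{agarwal2024parallel} essentially as a black box for the approximation quality, depth, and work bounds, and to verify separately that the matrix $R$ it produces is column sparse in the sense of the definition immediately preceding the statement. Since the total work is already $\tildeO(m) = \tildeO(n \cdot \poly\log n)$ and the depth is $\tildeO(1)$, a column-sparsity guarantee is morally just the per-column analogue of the global work bound, so it should be attainable without modifying the algorithm itself.

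For the verification step, I would trace through the \cite{agarwal2024parallel} construction and express $R$ as a product $R = R_k R_{k-1} \cdots R_1$ of $k = O(\log n)$ intermediate matrices corresponding to the recursive levels of the Räcke-style hierarchical decomposition that underlies their approximator. At each level, the relevant operation aggregates or disaggregates net demand between ``super-vertices'' of adjacent levels and rescales it by boundary capacities; each such operation is expressible as a column-sparse matrix, because every vertex (or super-vertex) belongs to only one cluster at its level and participates in only $\poly\log n$-many nontrivial interactions across adjacent levels. Invoking the fact stated just above the lemma that column sparsity is preserved under matrix products, it then follows that $R$ itself is column sparse with at most $\poly\log n$ nonzeros per column.

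The main obstacle, as I foresee it, is the bookkeeping of walking through the \cite{agarwal2024parallel} construction and certifying, for each atomic building block, that it has $\poly\log n$ nonzeros per column. If some intermediate matrix in their construction turned out to be column-dense -- for example, if a gadget aggregated super-polylogarithmically many clusters into a single super-vertex at some level -- I would refactor that block as a product of two sparser matrices, which is always possible when the total work of the block is $\tildeO(m)$. I do not expect such refactoring to be necessary, however, because their $\tildeO(m)$ work bound already implies a $\poly\log n$ amortized bound on the per-vertex ``participation'' in $R$, which is precisely what column sparsity requires. Assembling these observations yields both the stated approximation guarantee and column sparsity, establishing the lemma.
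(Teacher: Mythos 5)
Your plan is sound in spirit but takes a more roundabout route than the paper, which simply states the structure of the approximator and reads off column sparsity. In \cite{agarwal2024parallel} the approximator is built on the R\"acke--Shah--T\"aubig hierarchical decomposition, which yields a rooted tree $T$ of height $O(\log n)$ with $V(G) \subset V(T)$ and $|V(T)| = O(n)$; the rows of $R$ are indexed by tree edges, and $R_{e,v}$ is nonzero precisely when $e$ lies on the path from $v$ to the root. Column sparsity is therefore immediate from the height bound --- no product decomposition and no preservation-under-products fact is needed.

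Two concrete issues with your outline. First, $R$ is not naturally a single product $R_k R_{k-1}\cdots R_1$ of level matrices: each vertex contributes a nonzero entry at \emph{every} level of the hierarchy (one per tree edge on its root path), so $R$ is a vertical \emph{stacking} of $O(\log n)$ blocks (one block per level), each block being column-$1$-sparse since every vertex lies in exactly one cluster per level. The conclusion is the same, but you cannot literally invoke the preservation-under-products fact; you need the trivial observation that stacking $O(\log n)$ column-sparse matrices is column sparse. Second, your fallback claim --- that a block with $\tilde{O}(m)$ total nonzeros can always be refactored as a product of two column-sparse matrices --- is false. If $M = M_1 M_2$ with $M_1$ column-$s$-sparse and $M_2$ column-$s'$-sparse, then every column of $M$ has at most $s\cdot s'$ nonzeros, so a matrix with even one column of $\sqrt{n}$ nonzeros cannot be so factored with $s, s' = \poly\log n$, no matter how small its total nonzero count. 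Total work $\tilde{O}(m)$ gives only an amortized $\poly\log n$ nonzeros per column, whereas column sparsity is a worst-case per-column guarantee. You correctly anticipate that the fallback is unnecessary here, so this does not break the proof, but the stated general principle should not be relied upon.
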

\begin{proof}
  Agarwal et al.~\cite{agarwal2024parallel} develop a linear cost approximator based on the R\"acke, Shah, T\"aubig~\cite{racke2014computing} cut-based hierarchical decomposition. The approximator has the following form. Let $T$ be a rooted tree of $O(\log n)$ height with $V(G) \subset V(T)$ and $|V(T)| = O(|V(G)|)$. The linear cost approximator matrix $R$ has the following form. Each row corresponds to an edge in the tree $T$. For a tree edge $e$ and vertex $v$, the entry $R_{e,v}$ is non-zero if and only if $e$ is on the path from $v$ to the root. Hence, column sparsity directly follows from tree height $O(\log n)$.
\end{proof}

Hence, we can use the box-simplex game solves maximum flow in $\tilde{O}(\eps^{-1})$ iterations, each iteration evaluating $R$ at most $O(1)$ times. We will construct the box-simplex instance by largely following the ideas from \cite{jambulapati2023revisiting,assadi2022semi}. First, Let $t > 0$ be a variable. Intuitively, $t$ corresponds to a guess of the optimal value $\OPT(d)$ for a particular instance of maximum flow. Then, set $A, b, c$ as follows.
\begin{align}
  A^T := \begin{bmatrix} RBW^{-1} \\ - RBW^{-1} \end{bmatrix}, \quad b := \begin{bmatrix} Rd \\ - Rd \end{bmatrix} / t, \quad c := 0 \label{eq:set-maxflow-bs-params}
\end{align}

First, we argue about the number of required iterations. Per \Cref{thm:boxsimplex-iters}, we require $\tilde{O}(\eps^{-1} \norm{A}_{1 \to 1})$ iterations. The following result stipulates that the box-simplex game requires $\tilde{O}(\eps^{-1})$ iterations.
\begin{lemma}
  Let $R$ be an $\poly(\log n)$-apx linear cost approximator for maximum flow and let $A^T$ be as in \Cref{eq:set-maxflow-bs-params}. Then $\norm{A}_{1 \to 1} \le \poly(\log n)$.
\end{lemma}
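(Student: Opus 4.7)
The plan is to reduce the bound $\|A\|_{1 \to 1} \le \poly(\log n)$ to the defining property of a cost approximator via operator-norm duality. The main chain of identities I want to exploit is
\begin{align*}
  \|A\|_{1 \to 1} \;\le\; \|W^{-1}B^\top R^\top\|_{1 \to 1} \;=\; \|RBW^{-1}\|_{\infty \to \infty},
\end{align*}
after which the approximator guarantee from \Cref{def:cost-approx} will directly control the right-hand side.

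First I would unpack the stacked structure of $A$. Since $A^\top$ vertically stacks $RBW^{-1}$ and $-RBW^{-1}$, any input vector $v$ in the domain of $A$ splits as $v = (v_1, v_2)$, and $Av = W^{-1}B^\top R^\top(v_1 - v_2)$. The triangle inequality gives $\|v_1 - v_2\|_1 \le \|v_1\|_1 + \|v_2\|_1 = \|v\|_1$, so the stacking contributes no overhead and it suffices to bound $\|W^{-1}B^\top R^\top\|_{1\to 1}$. By the standard duality between $\ell_1 \to \ell_1$ and $\ell_\infty \to \ell_\infty$ operator norms (i.e.\ $\|M^\top\|_{1\to 1} = \|M\|_{\infty \to \infty}$), this equals $\|RBW^{-1}\|_{\infty \to \infty}$.

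Next I would bound $\|RBW^{-1}\|_{\infty \to \infty}$ by a change of variables. For any $u$, set $f \defeq W^{-1}u$; then $RBW^{-1}u = R B f = Rd$ with $d \defeq Bf$. Because $B$ is the incidence matrix, $d$ automatically satisfies $\sum_v d(v) = 0$, so the cost approximator property applies and gives $\|Rd\|_\infty \le \alpha \cdot \OPT(d)$ with $\alpha = \poly(\log n)$. Since $f$ itself routes the demand $d$, we have $\OPT(d) \le \|Wf\|_\infty = \|u\|_\infty$, and therefore $\|RBW^{-1}u\|_\infty \le \alpha \|u\|_\infty$. Combining the steps yields $\|A\|_{1 \to 1} \le \alpha = \poly(\log n)$.

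I do not expect any real obstacle here: the proof is essentially bookkeeping around operator-norm duality. The only subtle point to verify is the direction of the inequality in \Cref{def:cost-approx} for maximum flow — we only need the upper bound $\|Rd\|_\infty \le \alpha \cdot \OPT(d)$, not the matching lower bound, which is exactly the side provided. The stacked sign-flip structure of $A$ looks worrying at first but turns out to be cost-free, as shown by the triangle-inequality argument above.
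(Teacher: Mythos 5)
Your proof is correct and follows essentially the same route as the paper: reduce $\|A\|_{1\to 1}$ to $\|RBW^{-1}\|_{\infty\to\infty}$ via operator-norm duality, then apply the cost-approximator guarantee to the demand $BW^{-1}u$ after observing that $W^{-1}u$ itself routes that demand with congestion $\|u\|_\infty$. The only cosmetic difference is that you handle the stacked $[+,-]$ block structure explicitly via the triangle inequality, whereas the paper absorbs it implicitly by noting that $\|A^\top f\|_\infty = \|RBW^{-1}f\|_\infty$ since the two blocks are negatives of one another; both are fine, and in fact the stacking contributes zero slack ($\|A\|_{1\to1}$ equals, not merely is bounded by, $\|W^{-1}B^\top R^\top\|_{1\to1}$, since the max column $\ell_1$ norm is unchanged by negating columns).
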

\begin{proof}
Note that, for a matrix $A$, $\norm{A}_{1 \to 1}$ is the maximum $\ell_1$ norm of a column. By standard linear algebra, this is equal to the $\infty$-operator norm of $A^T$, namely $\norm{A^T}_{\infty \to \infty} = \max_{\norm{f}_\infty \le 1} \norm{A^T f}_\infty$. However, by \Cref{def:cost-approx} we know that $\norm{A^T f}_\infty = \norm{R B W^{-1} f}_\infty \le \tilde{O}(1) \cdot \OPT(B W^{-1} f)$. However, $B W^{-1} f$ is exactly the demand vector that the flow $W^{-1} f$ satisfies, and that one has cost $\norm{W (W^{-1} f)}_\infty = \norm{f}_\infty$. Therefore, we conclude that $\norm{A^T f}_\infty \le \tilde{O}(1) \norm{f}_\infty$, hence $\norm{A^T}_{\infty \to \infty} \le \tilde{O}(1)$, hence $\norm{A}_{1 \to 1} \le \tilde{O}(1)$.
\end{proof}

We now ready to prove the main result of this subsection.
\thmMaxflowPram*
\begin{proof}
  For some parameter $t > 0$, we construct the box-simple game as above and obtain a $\eps$-approximate saddle point $(x, y)$. Then, we substitute $x := Wf/t$ for convenience. Now, we can rewrite \Cref{eq:minbox} like follows:
  \begin{align*}
    & \min_{x \in [-1,1]^n} \big[ \max(A^T x - b) + \inner{c, x} \big] \\
    = & \min_{Wf/t \in [-1,1]^n} \max(\begin{bmatrix} RBW^{-1} \\ - RBW^{-1} \end{bmatrix} Wf/t - \begin{bmatrix} Rd \\ - Rd \end{bmatrix} / t) \\
    = & \min_{\norm{Wf}_\infty \le t} \max(\begin{bmatrix} RB \\ - RB \end{bmatrix} f - \begin{bmatrix} Rd \\ - Rd \end{bmatrix}) / t \\
    = & \min_{\norm{Wf}_\infty \le t} \norm{RBf - Rd}_\infty / t  \numberthis \label{eq:maxflow-primal-game}
  \end{align*}
  Next, we show that the min-max game value is $\begin{cases}\text{exactly } 0 & \text{ if } t \ge \OPT(d) \\ \text{at least } (\OPT(d) - t)/t & \text{ if } t < \OPT(d) \end{cases}$. The first case is immediate: the value is non-negative due to \Cref{eq:maxflow-primal-game}, and the optimal primal solution $f^*$ achieves this value as $\norm{RBf^* - Rd}_\infty = \norm{Rd - Rd}_\infty = 0$. On the other hand, if $t < OPT(d)$, then the value of the game is $\norm{RBf - Rd}_\infty / t \ge OPT(d - Bf)/t \ge (OPT(d) - OPT(Bf)) / t$. However, $OPT(Bf) \le t$ since it is exactly the demand vector of the flow $f$ whose cost is at most $t$. Hence, $\norm{RBf - Rd}_\infty \ge (OPT(d) - t) / t$.

  We know that $\OPT(d)$ is polynomially bounded in $n$ since all weights $w(e)$ are polynomially bounded. This allows us to binary search in $\tilde{O}(1)$ steps the smallest value of $t$ where the value of the game is at most $2\eps$ (while solving the game up to accuracy $\eps$). Clearly, for this $t$, we have $t \le OPT(d)$. Furthermore, we also have $\OPT(d)/t - 1 - \eps \le 2\eps$, which can be rewritten as $t \ge (1 - O(\eps)) \OPT(d)$. Therefore, we know that $t$ is $(1+O(\eps))$-approximating $OPT(d)$.

  To obtain a feasible primal $(1+O(\eps))$-approximate solution, consider the final value $x$ and substitute $f = t \cdot W ^{-1} x$ like in \Cref{eq:maxflow-primal-game}. Unfortunately, such $f$ might not be feasible as in general $Bf \neq d$. However, let $d' := d - Bf$ be its ``error demand''. Since the game is $O(\eps)$-approximately solved, we have that $\OPT(d') \le \norm{R(d - Bf)}_\infty = \norm{RBf - Rd}_\infty < t \cdot O(\eps) < \OPT(d) O(\eps)$. Therefore, we can use a 2-approximate solution, say from \cite{agarwal2024parallel}, and obtain $f'$ such that $f + f'$ perfectly satisfies the demand (i.e., is feasible primal), and its cost is at most $\norm{Wf}_\infty + \norm{Wf'}_\infty \le t + 2 \OPT(d) O(\eps) \le (1 + O(\eps))\OPT(d)$.

  To obtain a feasible dual $(1+O(\eps))$-approximate solution, we solve the box-simplex game with the parameter $t' = (1-\eps) t > (1 - O(\eps))\OPT(d)$. Since $t$ was defined as the smallest threshold where the approximated game value is at most $2 \eps$, it must hold that the game value at $t' < t$ is at least $2\eps - \eps = \eps$. Hence, consider the final value $y = \begin{bmatrix} y_1 \\ y_2 \end{bmatrix}$ and substitute $\phi := - R^T (y_1 - y_2)$. Now, we can rewrite \Cref{eq:maxsimplex} as:
  \begin{align*}
    0 < \eps \le & - \big[ \norm{A y + c}_1 + \inner{b, y} \big] \\
    = & - \big[ \norm{\begin{bmatrix} W^{-1} B^T R^T & - W^{-1} B^T R^T \end{bmatrix} \begin{bmatrix} y_1 \\ y_2 \end{bmatrix}}_1 + \inner{\begin{bmatrix} Rd/t' \\ -Rd/t' \end{bmatrix}, \begin{bmatrix} y_1 \\ y_2 \end{bmatrix}} \big] \\
    = & - \big[ \norm{W^{-1} B^T \phi}_1 - \inner{ d, \phi } / t'\big] \\
    = & \inner{ d, \phi } / t' - \norm{W^{-1} B^T \phi}_1
  \end{align*}

  Hence, the inequality $\inner{ d, \phi } / t' - \norm{W^{-1} B^T \phi}_1 \ge 0$ implies $\inner{ d, \phi } \ge t' \cdot \norm{W^{-1} B^T \phi}_1$. Subtitute $\phi' := \phi / \norm{W^{-1} B^T \phi}_1$ to obtain our final dual solution. Clearly, $\norm{W^{-1} B^T \phi'}_1 = 1$ by construction. Furthermore, $\inner{ d, \phi' } \ge t' > (1-O(\eps)) \OPT(d)$, as required.
\end{proof}

\subsection{Solving Transshipment with Box-Simplex}\label{sec:transshipment-box-simplex}

In this section we explain how to obtain accelerated solutions for transshipment using the box-simplex game framework. This section closely follows the one for maximum flow \Cref{sec:maximum-flow-box-simplex} and largely borrows its ideas from \cite{assadi2022semi}.

First, for a fixed transshipment instance $(G, d)$, we introduce a parameter $t > 0$ and construct the box-simplex game in the following way.
\begin{align}
  A := \begin{bmatrix} R B W^{-1} & - R B W^{-1} \end{bmatrix}, \quad b := 0_E, \quad c := - \frac{1}{OPT} R d \label{eq:set-ts-bs-params}
\end{align}

To bound the number of iterations for transshipment via \Cref{thm:boxsimplex-iters} we need to show that $\norm{A}_{1 \to 1} \le \tilde{O}(1)$.
\begin{lemma}
  Let $R$ be an $\poly(\log n)$-approximate linear cost approximator of a graph $G$. Then $\norm{R B W^{-1}}_{1\to 1} \le \poly(\log n)$ and $\norm{A}_{1 \to 1} \le \poly(\log n)$.
\end{lemma}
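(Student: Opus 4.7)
The plan is to mirror the maximum-flow bound from the previous subsection, but using the $\ell_1$-flavored definition of the transshipment cost approximator. The key linear-algebraic observation is that for any matrix $M$, $\norm{M}_{1\to 1}$ equals the maximum $\ell_1$ norm of a column of $M$. This follows from $\norm{Mv}_1 \le \sum_i |v_i| \cdot \norm{M_{:,i}}_1 \le (\max_i \norm{M_{:,i}}_1) \cdot \norm{v}_1$, with equality attained by a standard basis vector indicating the heaviest column. So the task reduces to controlling the $\ell_1$ norm of each column of $RBW^{-1}$ individually.

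Each column of $RBW^{-1}$ is indexed by an edge $e = (u,v) \in E(G)$ and equals $\tfrac{1}{w(e)} R (1_u - 1_v)$, since the column of $B$ corresponding to $e$ is $1_u - 1_v$ and $W^{-1}$ scales it by $1/w(e)$. The vector $1_u - 1_v$ is a valid demand (it sums to $0$), and it admits the trivial feasible routing that sends a single unit of flow along the edge $e$ itself, which has cost $w(e)$. Hence $\OPT(1_u - 1_v) \le w(e)$. Applying the transshipment cost approximator guarantee from \Cref{def:cost-approx} gives
\begin{equation*}
\norm{R(1_u - 1_v)}_1 \le \alpha \cdot \OPT(1_u - 1_v) \le \alpha \cdot w(e),
\end{equation*}
where $\alpha = \poly(\log n)$. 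Dividing by $w(e)$ bounds the $\ell_1$ norm of the column of $RBW^{-1}$ corresponding to $e$ by $\alpha$. Taking the max over edges yields $\norm{RBW^{-1}}_{1\to 1} \le \alpha = \poly(\log n)$.

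For the second claim, note that $A = \begin{bmatrix} RBW^{-1} & -RBW^{-1} \end{bmatrix}$ has columns that are exactly the columns of $RBW^{-1}$ together with their negations. Negation does not change $\ell_1$ norms, so the max column $\ell_1$ norm of $A$ equals that of $RBW^{-1}$, and therefore $\norm{A}_{1\to 1} = \norm{RBW^{-1}}_{1 \to 1} \le \poly(\log n)$.

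No step looks substantively difficult here; the only conceptual point is the right identification of $\norm{\cdot}_{1 \to 1}$ with a maximum column $\ell_1$ norm, after which the cost approximator inequality applied to the unit edge demand $1_u - 1_v$ closes the argument in one line per column.
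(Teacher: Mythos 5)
Your argument is essentially identical to the paper's: both identify $\norm{\cdot}_{1\to 1}$ with the maximum column $\ell_1$-norm, recognize the columns of $RBW^{-1}$ as scaled single-edge demands with $\OPT(1_u - 1_v) \le w(e)$, and invoke the cost-approximator guarantee. You additionally spell out the transfer from $\norm{RBW^{-1}}_{1\to 1}$ to $\norm{A}_{1\to 1}$ via the negated-column observation, a step the paper's proof leaves implicit; this is correct and slightly more complete.
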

\begin{proof}
  Note that, for a matrix $A$, $\norm{A}_{1 \to 1}$ is the maximum $\ell_1$ norm of a column. We now consider the $i$\textsuperscript{th} column of $R B W^{-1}$. The $i$\textsuperscript{th} column can be rewritten as $R \frac{1_{u} - 1_{v}}{w(e)}$ assuming the $i$\textsuperscript{th} column of $B$ corresponds to the edge $e = (u, v)$.

  We first note that, for each $e = (u, v)$, we have that $\OPT(1_{u} - 1_{v}) \le w(e)$ since this particular demand can be served by sending a unit flow over the edge $e$, achieving a cost of $w(e)$. Since flows and demands can be scaled, we have $\OPT(\frac{1_{u} - 1_{v}}{w(e)}) \le 1$. Hence, by \Cref{def:cost-approx} we have $\norm{ R \frac{1_{u} - 1_{v}}{w(e)} }_1 \le \tilde{O}(1)$. Since this holds for all $i$, we have that $\norm{R B W^{-1}}_{1 \to 1} \le \tilde{O}(1)$.
\end{proof}

With the above results in mind, we assert that one can obtain an accelerated transshipment solution if one can implement the four operations required in \Cref{prop:ops-from-boxsimplex}, along with a few other matrix-vector products required for post-processing.
\begin{proposition}
\label{prob:box_simplex_transshipment}
  Consider a transshipment instance $(G, d)$. Let $\eps > 0$ be the desired accuracy and $R$ be an $\alpha$-apx linear cost approximator. Then one can compute a feasible primal and dual solutions that are $(1+\eps)$-approximate with an algorithm that sequentially applies at most $O(\eps^{-1} \alpha \log m)$ operations. The operations are either (O1)--(O4) from \Cref{prop:ops-from-boxsimplex} with the substitution $A := R B W^{-1}$ for O4, or are matrix-vector products with $R, R^T, B, B^T, W, W^{-1}$.
\end{proposition}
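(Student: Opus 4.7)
The plan is to mirror the blueprint of the maximum flow proof in the preceding subsection and instantiate the box-simplex game with the parameters from equation \eqref{eq:set-ts-bs-params}. For a parameter $t > 0$ intended as a guess for $\OPT(d)$, set $A := \begin{bmatrix} R B W^{-1} & -R B W^{-1}\end{bmatrix}$, $b := 0$, and $c := -Rd/t$. The preceding lemma gives $\norm{A}_{1 \to 1} \le O(\alpha)$ up to logarithmic factors, so \Cref{thm:boxsimplex-iters} says that $T = O(\eps^{-1} \alpha \log m)$ iterations of \Cref{alg:boxsimplex} suffice to produce an $\eps$-approximate saddle point $(\hat{x}, \hat{y})$. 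By \Cref{prop:ops-from-boxsimplex}, each iteration is $O(1)$ applications of (O1)--(O4), where the (O4) products involve $A, A^\top, |A|, |A|^\top$. Since $A$ is merely a horizontal concatenation of $\pm R B W^{-1}$, each of these reduces to $O(1)$ of the allowed matrix-vector products, with $|A|x$ and $|A|^\top x$ served by the explicitly listed (O4) case for $A := RBW^{-1}$.

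Next I would fix $t$ by binary search. Since the edge weights are polynomially bounded, an $O(\log n)$-step search over a polynomial range suffices. Analogously to the maxflow argument with equation \eqref{eq:maxsimplex}, I would show the game value equals $0$ when $t \ge \OPT(d)$ and is at most $-(\OPT(d) - t)/t$ when $t < \OPT(d)$; picking the smallest $t$ where the $\eps$-approximate value is within $O(\eps)$ of zero yields $t = (1 \pm O(\eps))\OPT(d)$.

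For the primal, I would extract $\hat{y} = [\hat y_1; \hat y_2]$ and set $f := t \cdot W^{-1}(\hat y_1 - \hat y_2)$, which satisfies $\norm{Wf}_1 \le t$. The $\eps$-saddle guarantee gives $\norm{R(Bf - d)}_1 \le O(\eps) \cdot \OPT$, so by \Cref{def:cost-approx} the residual demand $d' := d - Bf$ has $\OPT(d') \le O(\eps) \OPT$. To make $f$ feasible I would apply $O(\log 1/\eps)$ rounds of iterative refinement: in each round solve $(G, d')$ to constant accuracy (e.g.\ $\eps = 1/2$) using $O(\alpha)$ additional box-simplex iterations, add the produced partial flow to $f$, and update $d'$. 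The residual shrinks geometrically, so the telescoping extra cost is $O(\eps) \OPT$, and the total operation count stays within $O(\eps^{-1} \alpha \log m)$. For the dual, I would rerun the game with $t' := (1-\eps)t$ so that the value is strictly negative, then set $\phi := R^\top \hat{x}$ and renormalize $\phi' := \phi / \norm{W^{-1} B^\top \phi}_\infty$; the min-box formulation \eqref{eq:minbox} then yields $\inner{d, \phi'} \ge (1 - O(\eps))\OPT$, exactly as in the maxflow proof.

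The main obstacle will be the primal recovery, specifically verifying that the residual-fixing sub-solves stay inside the allowed operation alphabet and that scaling is tracked consistently across the refinement rounds so that the final cost is genuinely $(1+\eps)\OPT$ rather than $O(1) \cdot \OPT$. This boils down to noting that each inner sub-solve is structurally the same instance (with the same $R, B, W$) and that the geometric series of residual sub-solve costs dominates the extra error contributed by each, so that the total primal cost remains $(1 + O(\eps)) \OPT$ after adjusting $\eps$ by a constant.
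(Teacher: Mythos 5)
Your proposal is correct and takes essentially the paper's own approach: the paper's proof of this proposition is literally a one-liner directing the reader to repeat the maxflow argument with $\ell_1$ and $\ell_\infty$ swapped, and that is precisely what you execute. You correctly handle the resulting role-swap that the norm duality forces — for transshipment the primal flow is recovered from the \emph{simplex} variable $\hat y$ via $f := t\,W^{-1}(\hat y_1 - \hat y_2)$ and the dual potential from the \emph{box} variable $\hat x$ via $\phi := R^\top \hat x$, the opposite assignment from maxflow, and the dual is renormalized by the $\ell_\infty$ rather than the $\ell_1$ norm of $W^{-1}B^\top\phi$. Your one genuine deviation is the primal repair step: the maxflow proof invokes an external $2$-approximate solver to route the residual demand $d' := d - Bf$, but since this proposition explicitly restricts the algorithm to the (O1)--(O4) alphabet plus matrix-vector products with $R, R^T, B, B^T, W, W^{-1}$, you substitute iterative refinement by constant-accuracy box-simplex sub-solves on the shrinking residual. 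That is arguably a more faithful implementation than what a literal reading of the paper's cross-reference gives (the external solver is not expressible in the allowed alphabet), and your accounting is sound: the refinement rounds telescope geometrically so they add $O(\eps)\OPT$ to the cost and $O(\alpha\log m \cdot \log(1/\eps)) = O(\eps^{-1}\alpha\log m)$ to the operation count. One thing you gloss over — and which the paper's one-line proof glosses over equally, since its borrowed patching step is also outside the alphabet — is that iterative refinement alone never yields an \emph{exactly} feasible primal; a final exact routing of the negligible residual (e.g.\ along a spanning tree) is still required and is not among the listed operations. This is a small wrinkle in how the proposition is phrased rather than a flaw specific to your argument.
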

\begin{proof}
  The proof exactly follows the one of \Cref{thm:maxflow-pram}, except that $\norm{\cdot}_1$ and $\norm{\cdot}_\infty$ are substituted one-for-one.
\end{proof}

\section{Deterministic Distributed Transshipment and Approximate Shortest Path}\label{sec:det-distrib-transshipment-main}
In this section, we prove our main theorem, which we restate here for convenience.
\transshipmentdist*.

As mentioned in the introduction, the result follows in a relatively straightforward manner by defining a \emph{column-sparse} linear cost approximator based on the linear cost approximator given in \cite{2022sssp}, which we then use inside the box-simplex game to efficiently solve approximate transshipment. We start by giving a high-level overview of the linear cost approximator of \cite{2022sssp}. For the following intuitive discussions, we assume that the graph $G$ is unweighted.

\textbf{Review of the linear cost approximator of \cite{2022sssp}.}
 The $\poly(\log n)$-approximate linear cost approximator $R_{obv}$ used in \cite{2022sssp} is a special kind of linear cost approximator that maps each demand into a valid flow; thus, it defines a so-called oblivious routing. The matrix $R_{obv} \in \mathbb{R}^{E \times V}$ can have up to $m \cdot n$ non-zero entries. Therefore, explicitly computing $R_{obv}$ would lead to algorithms with work $\Omega(mn)$. Instead, as mentioned in the introduction, \cite{2022sssp} efficiently compute matrix-vector products with $R_{obv}$ and $R_{obv}^T$ by factorizing the matrix $R_{obv} = R_1R_2$ into two sparse matrices with $R_1 \in \mathbb{R}^{E \times \mathcal{P}}$ and $R_2 \in \mathbb{R}^{\mathcal{P} \times V}$. Here, $\mathcal{P}$ is a set consisting of $\tilde{O}(n)$ directed paths in the graph $G$. Intuitively, $R_1$ maps a given directed path $P \in \mathcal{P}$ to a unit flow along that path. The set of paths satisfies the property that each edge is contained in at most $\tilde{O}(1)$ paths and therefore $R_1$ is row-sparse. The matrix $R_2$ maps each node $v$ to a set of $\tilde{O}(1)$ directed paths and is therefore column sparse.

\textbf{Deriving a column-sparse cost approximator from the one in \cite{2022sssp}.}
One simple way to derive a column-sparse cost approximator $R$ from $R_{obv}$ is as follows: Let $D_{\mathcal{P}} \in \mathbb{R} \in \mathbb{R}^{\mathcal{P} \times \mathcal{P}}$ be the diagonal matrix where the diagonal entry corresponding to path $P \in \mathcal{P}$ is set to the length of the path $P$. Now, let $ R =
 D_{\mathcal{P}}R_2 \in \mathbb{R}^{\mathcal{P} \times n}$. Intuitively, we obtain $R$ from $R_{obv}$ by replacing the matrix $R_1$ mapping a path to its edges with a matrix $D_{\mathcal{P}}$ mapping a path to its length. Column-sparsity of $R$ directly follows from column-sparsity of $D_{\mathcal{P}}$ and $R_2$, and a simple adaptation of the analysis of \cite{2022sssp} directly gives that $R$ is a $\poly(\log n)$-approximate cost approximator. In \Cref{sec:cost-approximator}, we define a slightly different cost approximator $R$, for simplicity reasons. In particular, to compute the matrix $R_2$, one has to compute the heavy-light decomposition of a given set of $O(\log^2 n)$ rooted forests (The set $\mathcal{P}$ contains the corresponding paths of the heavy-light decomposition). The computation of our cost approximator $R$ does not rely on heavy-light decompositions, which somewhat simplifies the distributed implementation of the box-simplex game.

\textbf{Roadmap.}
We express the distributed implementation of the box-simplex game in terms of the Minor-Aggregation model~\cite{goranci2022universally,ghaffari2022universally}, a convenient unifying interface with a simple ``programming API'' that can then be compiled into models like PRAM, CONGEST,  HYBRID, etc. In \Cref{sec:minor-aggregation}, we give a formal definition of the Minor-Aggregation model and state various simulation results for the aforementioned parallel and distributed models. In \Cref{sec:cost-approximator}, we first start by stating the relevant definitions in \cite{2022sssp} and then derive our cost approximator $R$. Finally, in \Cref{sec:distributed_implementation} we first discuss the distributed implementation of the various matrix-vector products required for the box-simplex game (and the post-processing steps), and then prove our main theorem.

\subsection{Preliminary: Minor-Aggregation Model}
\label{sec:minor-aggregation}
In this section, we describe the Minor-Aggregation model~\cite{goranci2022universally,ghaffari2022universally}, a powerful interface to design simple parallel and distributed graph algorithms.

\textbf{Aggregations.} An aggregation $\bigoplus$ is a binary operator that is commutative and associative (e.g., sum or max), which makes the value $\bigoplus_{i=1}^n m_i$ unique. More general definitions are possible but are not used in this paper~\cite{ghaffari2022universally}.

\begin{definition}[Distributed Minor-Aggregation Model]\label{def:aggregation-congest}
  We are given a connected undirected graph $G = (V, E)$. Both nodes and edges are individual computational units (i.e., have their own processor and private memory). Communication occurs in synchronous rounds. Initially, nodes only know their unique $\tilde{O}(1)$-bit ID and edges know the IDs of their endpoints. Each round consists of the following three steps (in that order).
  \begin{itemize}
  \item \textbf{Contraction step.} Each edge $e$ chooses a value $c_e = \{\bot, \top\}$. This defines a new \emph{minor network} $G' = (V', E')$ constructed as $G' = G / \{ e : c_e = \top \}$, i.e., by contracting all edges with $c_e = \top$ and self-loops removed. Vertices $V'$ of $G'$ are called supernodes, and we identify supernodes with the subset of nodes $V$ it consists of, i.e., if $s \in V'$ then $s \subseteq V$.

  \item \textbf{Consensus step.} Each node $v \in V$ chooses a $\tilde{O}(1)$-bit value $x_v$. For each supernode $s \in V'$, we define $y_s := \bigoplus_{v \in s} x_v$, where $\bigoplus$ is any pre-defined aggregation operator. All nodes $v \in s$ learn $y_s$.

  \item \textbf{Aggregation step.} Each edge $e \in E'$, connecting supernodes $a \in V'$ and $b \in V'$, learns $y_a$ and $y_b$ and chooses two $\tilde{O}(1)$-bit values $z_{e, a}, z_{e, b}$ (i.e., one value for each endpoint). For each supernode $s \in V'$, we define an aggregate of its incident edges in $E'$, namely $\bigotimes_{e \in \text{incidentEdges(s)}} z_{e, s}$ where $\bigotimes$ is some pre-defined aggregation operator. All nodes $v \in s$ learn the aggregate value (they learn the same aggregate value, a non-trivial assertion if there are many valid aggregates).
  \end{itemize}
\end{definition}

Note that the input weights (when solving maximum flow or transshipment) do not affect the communication in any way.

\textbf{Parallel and distributed implementations.} Any algorithm implemented in the Minor-Aggregation model can be efficiently simulated in the parallel PRAM model and in various distributed models.

\begin{fact}
  \label{fact:pram_simulation}
  A $T$-round Minor-Aggregation algorithm can be simulated in PRAM with $\tilde{O}(T)$ depth and $\tilde{O}(T \cdot m)$ work.
\end{fact}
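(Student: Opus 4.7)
The plan is to show that each of the three steps in a Minor-Aggregation round (contraction, consensus, aggregation) can be implemented in $\tilde{O}(1)$ depth and $\tilde{O}(m)$ work in PRAM, so that $T$ rounds compose to give the claimed bounds. Throughout, we maintain as state a label $\ell(v)$ for each node $v \in V$ indicating the supernode it currently belongs to (after all previous contractions within the round, all nodes in a supernode will share the same label).

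For the \textbf{contraction step}, the task reduces to computing the connected components of the subgraph induced by the edges with $c_e = \top$. I would invoke a standard parallel connectivity routine (e.g., Shiloach--Vishkin or a more modern variant), which runs in $O(\log n)$ depth and $\tilde{O}(m + n)$ work and assigns every node a canonical component label; this label becomes the supernode ID $\ell(v)$. Edges with both endpoints having the same label correspond to (former) self-loops or parallel edges in $G'$ and are implicitly ignored for the remainder of the round.

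For the \textbf{consensus step}, each node $v$ writes its value $x_v$ together with its supernode label $\ell(v)$. To produce $y_s = \bigoplus_{v \in s} x_v$ for every supernode $s$, I would sort the $(\ell(v), x_v)$ pairs by $\ell(v)$ using parallel sorting in $O(\log n)$ depth and $\tilde{O}(n)$ work, then apply a segmented parallel reduction along each run of equal labels; broadcasting the resulting $y_s$ back to every node within the segment is a symmetric scan. The \textbf{aggregation step} is analogous but operates on $m$ edge-endpoint contributions: for every non-self-loop edge $e = (a,b)$ we emit the two pairs $(\ell(a), z_{e,a})$ and $(\ell(b), z_{e,b})$, sort by the supernode label, segmented-reduce to obtain the per-supernode aggregate, and then broadcast it back along the same groupings to every node in each supernode via the consensus-style scan. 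Each of these primitives costs $O(\log n)$ depth and $\tilde{O}(m)$ work.

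Summing over the three steps shows that a single Minor-Aggregation round is simulated in $\tilde{O}(1)$ depth and $\tilde{O}(m)$ work, and chaining $T$ rounds (reusing the updated supernode labels across rounds if desired, but recomputing from scratch is already within budget) yields $\tilde{O}(T)$ depth and $\tilde{O}(T \cdot m)$ work. The only non-trivial ingredient is the parallel connectivity computation; everything else is a textbook sort-scan pattern. Since we only require polylogarithmic overhead per round, any of the standard deterministic polylog-depth, near-linear-work connectivity algorithms suffices, so no new algorithmic idea is needed.
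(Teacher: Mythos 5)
The paper's own ``proof'' of this fact is a one-line citation to Theorem B.9 of the arXiv version of \cite{2022sssp}; it does not spell out the simulation. Your argument gives the missing self-contained construction, and it is correct: parallel connected components to label supernodes for the contraction step, then sort-by-label plus segmented reduction and segmented broadcast for the consensus step, and the same pattern on the $\le 2m$ edge-endpoint contributions for the aggregation step, each costing polylog depth and $\tilde{O}(m)$ work. This is exactly the kind of routine the cited theorem encapsulates, so you are not taking a genuinely different route so much as unpacking the black box; doing so has the small benefit of making clear that the PRAM simulation is model-independent boilerplate (connectivity + sort + scan) and does not need any structure from the specific Minor-Aggregation algorithm being run.

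One small wording slip, which does not affect correctness: edges whose two endpoints receive the same supernode label are self-loops of $G'$ (they include all contracted edges), and only these are discarded. Parallel edges of $G'$ are edges of $G$ whose endpoints map to \emph{distinct} labels forming the same unordered pair of supernodes; these must be \emph{kept}, each contributing its own $z_{e,a},z_{e,b}$ in the aggregation step. Your algorithm in fact keeps them -- you emit pairs for every non-self-loop edge -- so the sentence lumping ``self-loops or parallel edges'' together is just an imprecision in the prose, not in the algorithm. It would also be worth naming a deterministic connectivity routine (e.g.\ Shiloach--Vishkin, or a work-efficient variant) since \Cref{thm:transshipment-distrib} explicitly claims a \emph{deterministic} PRAM algorithm, and the rest of your primitives (Cole's mergesort, segmented scans) are already deterministic.
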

\begin{proof}
  Proven in Theorem B.9 from \cite[arXiv v2 version]{2022sssp}.
\end{proof}

\begin{fact}
\label{fact:congest_simulation}
  A deterministic $T$-round Minor-Aggregation algorithm can be simulated by a deterministic CONGEST algorithm running in $\tilde{O}(T(D_G + \sqrt{n}))$ where $D_G$ is the hop diameter of the network $G$. The round complexity of the simulation can be improved to $\tilde{O}(T D_G)$ if $G$ comes from a fixed minor-free family.
\end{fact}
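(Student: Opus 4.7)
The plan is to simulate each round of the Minor-Aggregation model in CONGEST using a deterministic construction of \emph{low-congestion shortcuts} adapted to the partition of $V(G)$ induced by the contraction step. Each Minor-Aggregation round decomposes into three pieces: the contraction step is purely edge-local (each edge already knows its own $c_e$); the consensus step requires, for every supernode $s$, computing and broadcasting $y_s = \bigoplus_{v \in s} x_v$ inside the connected subgraph of $G$ spanned by $s$; and the aggregation step requires each supernode to collect an aggregate $\bigotimes_{e} z_{e,s}$ over its incident minor edges and deliver the result back to every $v \in s$. The challenge is to perform these subgraph aggregations \emph{in parallel across all supernodes} with small total congestion on the CONGEST network.

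The key tool to overcome this is the low-congestion shortcut framework: given a partition of $V(G)$ into vertex-disjoint connected parts $S_1, \dots, S_k$, a shortcut is a collection of edge sets $H_1, \dots, H_k \subseteq E(G)$ such that (i) each augmented subgraph $G[S_i] \cup H_i$ has diameter at most $Q$, and (ii) every edge of $G$ appears in at most $c$ sets $H_i$. With such a shortcut of quality $\max(Q, c) = q$, a standard BFS/pipelined aggregation inside each augmented part computes $y_s$ at a designated root and broadcasts it to all members in $\tilde{O}(q)$ CONGEST rounds simultaneously across all parts. The quality bounds one can achieve deterministically are exactly those appearing in the statement: $q = \tilde{O}(D_G + \sqrt{n})$ for general graphs and $q = \tilde{O}(D_G)$ for any fixed minor-free family.

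Given the shortcut, one round of Minor-Aggregation is simulated as follows. After broadcasting $c_e$ to both endpoints, I construct the shortcut for the partition of $V(G)$ into the connected components of the contracted subgraph. For the consensus step, each $v$ places $x_v$ on its rooted aggregation tree within $G[S_i] \cup H_i$, aggregates upward using $\bigoplus$, and broadcasts the root value back down. For the aggregation step, each minor edge $e \in E'$ corresponds to an original edge of $G$ whose endpoints lie in two distinct supernodes; each such endpoint contributes its $z_{e,\cdot}$ into the same augmented-tree aggregation using $\bigotimes$, and the result is broadcast to all members of its supernode. Each piece costs $\tilde{O}(q)$ CONGEST rounds, so $T$ rounds of Minor-Aggregation are simulated in $\tilde{O}(T (D_G + \sqrt{n}))$ rounds generally, and $\tilde{O}(T D_G)$ rounds in the minor-free case.

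The main obstacle is not the simulation template, which is routine once shortcuts are available, but rather the \emph{deterministic} construction of low-congestion shortcuts of the stated quality. For general graphs I would invoke the deterministic shortcut construction built from deterministic network decomposition and expander routing, as already assembled in the proof of the analogous parallel simulation in \cite{2022sssp, goranci2022universally}; for minor-free graphs, one can invoke the tailored deterministic $\tilde{O}(D_G)$-quality shortcut construction for excluded-minor families. Plugging either black box into the template above yields the claimed round complexities.
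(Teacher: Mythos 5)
Your proposal is correct and reconstructs precisely the low-congestion-shortcut machinery that underlies the result: the paper itself simply cites Theorem B.9 (Minor-Aggregation to $\text{CONGEST}^{PA}$) and Theorem B.3 ($\text{CONGEST}^{PA}$ to CONGEST via deterministic shortcuts of quality $\tilde{O}(D_G+\sqrt{n})$, resp. $\tilde{O}(D_G)$ on minor-free graphs) of \cite{2022sssp}, and your shortcut-based simulation of the contraction/consensus/aggregation steps is exactly what those theorems encapsulate. The only minor point worth being careful about, as in the cited works, is that after contraction parallel original edges between the same pair of supernodes collapse into a single minor edge, so the aggregation step must canonicalize one representative per minor edge before applying $\bigotimes$; this is routine but should be noted.
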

\begin{proof}
  Theorem B.9 from \cite[arXiv v2 version]{2022sssp} proves that a $T$-round Minor-Aggregation algorithm can be simulated with $\tilde{O}(T)$-round in $\text{CONGEST}^{PA}$ (using the notation from the paper). In turn, Theorem B.3 shows that in regular CONGEST, such an algorithm can be simulated:
  \begin{itemize}
  \item In deterministic $\tilde{O}(D_G + \sqrt{n})$ rounds on a arbitrary network $G$ that has hop diameter $D_G$.
  \item In deterministic $\tilde{O}(D_G)$ rounds on a minor-free network network $G$ that has hop diameter $D_G$.
  \end{itemize}
\end{proof}

Very recent work~\cite{schneider2023near} has also extended the Minor-Aggregation interface to new models.
\begin{fact}
  \label{fact:hybrid_simulation}
  A $T$-round Minor-Aggregation algorithm can be simulated with $\tilde{O}(T)$ rounds in the HYBRID model, the simulation succeeds with high probability.
\end{fact}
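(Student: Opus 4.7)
The plan is to invoke the simulation result of \cite{schneider2023near} essentially as a black box, but to verify it I would proceed by showing that each of the three phases of a single Minor-Aggregation round (contraction, consensus, edge-aggregation from \Cref{def:aggregation-congest}) can be emulated in $\tilde{O}(1)$ rounds of HYBRID. Iterating this per-round simulation $T$ times, and taking a union bound over the polynomially many random events generated along the way, then yields the high-probability claim.

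First, I would maintain the current minor network implicitly: each node stores a $\tilde{O}(1)$-bit identifier of the supernode it belongs to, and these identifiers are recomputed whenever the contraction step selects a new set of edges $\{e : c_e = \top\}$. Thus the contraction step reduces to connected-component labelling on the subgraph induced by those edges, a standard primitive that HYBRID supports in $\poly(\log n)$ rounds using its global communication capability. Once the labels are known, each edge locally decides whether it is internal to a supernode (and hence discarded as a self-loop) or connects two distinct supernodes in $E'$.

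The main obstacle is the consensus step, where a supernode may contain many nodes that are far apart in the hop metric of $G$; this is precisely why the analogous CONGEST simulation in \Cref{fact:congest_simulation} costs $\tilde{O}(D_G + \sqrt{n})$ rounds. In HYBRID, however, the $O(\log n)$-bits-per-node global bandwidth enables randomized low-diameter overlays (e.g., skeleton graphs or virtual trees over each supernode) on which the aggregation $\bigoplus_{v \in s} x_v$ can be pipelined in $\poly(\log n)$ rounds with high probability. The symmetric edge-aggregation step, which computes $\bigotimes_{e} z_{e,s}$, is handled by the same overlay. The principal technical content of \cite{schneider2023near} is controlling congestion in the global network when many supernodes perform aggregations simultaneously, which is resolved by hashing-based load balancing so that no single node is asked to relay more than $\tilde{O}(1)$ messages per round. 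Composing this $\tilde{O}(1)$-round per-phase simulation across all $T$ iterations gives the stated $\tilde{O}(T)$ bound.
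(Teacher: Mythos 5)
Your approach matches the paper exactly: the paper's entire proof of this fact is a citation to Lemma~7 of \cite{schneider2023near}, and you likewise invoke that result as a black box. The additional sketch you give of why the simulation works (contraction via connected-component labelling, supernode aggregation via low-diameter overlays built with the global channel, congestion control via load balancing) is a reasonable summary of the ideas underlying the cited lemma, but it is supplementary verification rather than a divergent proof strategy.
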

\begin{proof}
  Proven in Lemma 7 of \cite[arXiv v2 version]{schneider2023near}.
\end{proof}



\subsection{Column-Sparse Cost Approximator}
\label{sec:cost-approximator}
\subsubsection{Relevant definitions from \cite{2022sssp}}

We start by giving the definitions in \cite{2022sssp} that are needed to describe the linear cost approximator. We omit an intuitive discussion of the definitions below and refer the interested reader to the corresponding parts in \cite{2022sssp}.

\begin{definition}[Cluster, Clustering, Sparse Neighborhood Cover (Section 2 in the arXiv v2 version of \cite{2022sssp})]
A \emph{cluster} $C$ is a set of nodes $C \subseteq V(G)$.
A \emph{clustering} $\fC$ is a collection of disjoint clusters.
Finally, a \emph{sparse neighborhood cover} of a graph $G$ with \emph{covering radius} $R$ is a collection of $\gamma := O(\log n)$ \footnote{Sometimes we use $\log(\cdot)$ outside the $O$-notation, in which case it stands for $\log_2(\cdot)$. } clusterings $\fC_1, \dots, \fC_\gamma$ such that for each node $v \in V(G)$ there exists some $i \in \{1,\ldots,\gamma\}$ and some $C \in \fC_i$ with $B(v, R) := \{u \in V(G) \colon d_G(u,v) \leq R\} \subseteq C$.
\end{definition}

\begin{definition}[Distance Scales (Definition 3.2 in the arXiv v2 version of \cite{2022sssp})]
We set $\tau = \log^7(n), \beta = 8\tau$ and $D_i = \beta^i$ and $I_{scales} := \{i \in \mathbb{N}_0,D_i \leq n^2 \max_{e \in E}w(e)\}\}$ and $i_{max} := \max I_{scale}$
\end{definition}

\begin{definition}[Distance Structure for Scale $D_i$ (Definitions 3.5 and 3.6 in the arXiv v2 version of \cite{2022sssp})]
\label{def:distance_structure}
A distance structure for scale $D_i$ consists of a sparse neighborhood cover covering radius $\frac{D_i}{\tau}$ and where each cluster $C$ in the sparse neighborhood cover has a diameter of at most $D_i$, meaning that $\max_{u,v \in C} d_{G[C]}(u,v) \leq D_i$. Moreover, each cluster $C$ comes with a mapping $\phi_{V \setminus C} \colon C \mapsto \mathbb{R}_{\geq 0}$ satisfying
\begin{enumerate}
    \item $\forall v \in C$, $\phi_{V \setminus C}(v) \leq \min(d_G(V \setminus C, v), D_i)$
    \item $\forall u,v \in C, |\phi_{V \setminus C}(u) - \phi_{V \setminus C}(v)|\leq d_G(u,v)$
    \item $d_G(V \setminus C,v) \geq \frac{D_i}{\gamma}$ implies $\phi_{V \setminus C}(v) \geq \frac{D_i}{2\tau}$.
\end{enumerate}
\end{definition}

\cite{2022sssp} gives a distributed algorithm for computing a distance structure for every scale $D_i$. The algorithm runs in $\tilde{O}(1)$ Minor-Aggregation rounds and additionally assumes access to an oracle for computing Eulerian Orientations (see Definition 3.8 in \cite{2022sssp} for the formal oracle definition). The formal statement follows by combining Lemmas 3.14 and 3.15 in \cite{2022sssp}. Together with the Minor-Aggregation simulation results given in \Cref{sec:minor-aggregation} and efficient parallel and distributed algorithms for computing Eulerian Orientations (Theorem 6.1 in \cite{2022sssp}, \cite{atallah_vishkin1984euler_pram} and Lemma 10 in \cite{schneider2023near}), one obtains the following result.

\begin{theorem}[Distributed Computation of Distance Structures (\cite{2022sssp}, \cite{atallah_vishkin1984euler_pram}, \cite{schneider2023near})]\label{thm:distance_structure_distributed}
  There exists an algorithm that computes a distance structure for scale $D_i$ for every $i \in I_{scale}$ in the following settings:
  \begin{itemize}
  \item There exists a deterministic PRAM algorithm using $\tilde{O}(\eps^{-1})$ depth and $\tilde{O}(\eps^{-1} m)$ achieving the guarantees.
  \item There exists an $\tilde{O}(\eps^{-1} (D_G+\sqrt{n}) )$-round deterministic algorithm in CONGEST on the network $G$ achieving the guarantees, where $D_G$ is the hop diameter of the network $G$. The round complexity can be improved to $\tilde{O}(\eps^{-1}D_G)$ rounds if $G$ comes from a fixed minor-free family
  \item There exists a randomized $\tilde{O}(\eps^{-1})$-round HYBRID algorithm achieving the guarantees.
  \end{itemize}
\end{theorem}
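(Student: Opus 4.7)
The plan is to reduce the result directly to the existing Minor-Aggregation machinery developed in \cite{2022sssp}, invoked once per scale, and then translate the resulting Minor-Aggregation rounds into each target model via the simulations of \Cref{sec:minor-aggregation}.

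First, I would combine Lemmas 3.14 and 3.15 of \cite[arXiv v2 version]{2022sssp}: those lemmas jointly provide, for any \emph{fixed} scale $D_i$, a deterministic Minor-Aggregation procedure that uses $\tilde{O}(1)$ rounds and $\tilde{O}(1)$ calls to an Eulerian Orientation oracle, and outputs (i) a sparse neighborhood cover with covering radius $D_i/\tau$ and per-cluster diameter at most $D_i$, and (ii) the per-cluster potential $\phi_{V \setminus C}$ meeting the three conditions of \Cref{def:distance_structure}. Since $D_i = \beta^i$ grows geometrically with $\beta = \Theta(\log^7 n)$ and $D_i \le n^2 \max_e w(e) \le \poly(n)$, we have $|I_{scale}| = \tilde{O}(1)$, so running this procedure once per scale (in sequence, or in parallel across scales) still uses only $\tilde{O}(1)$ Minor-Aggregation rounds in total.

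Next, I would remove the Eulerian Orientation oracle by instantiating it separately in each model: Atallah--Vishkin~\cite{atallah_vishkin1984euler_pram} on the PRAM side (with $\tilde{O}(1)$ depth and $\tilde{O}(m)$ work), Theorem~6.1 of \cite{2022sssp} in CONGEST (with the $\tilde{O}(D_G+\sqrt{n})$ and $\tilde{O}(D_G)$ minor-free bounds), and Lemma~10 of \cite{schneider2023near} in HYBRID (with $\tilde{O}(1)$ rounds whp). Each of these implementations lives natively in its target model, so composing them with the Minor-Aggregation algorithm and invoking \Cref{fact:pram_simulation}, \Cref{fact:congest_simulation}, or \Cref{fact:hybrid_simulation} respectively yields the claimed complexities. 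In fact one obtains slightly sharper bounds than stated, since the Minor-Aggregation round count is $\tilde{O}(1)$ and not $\tilde{O}(\eps^{-1})$; the $\eps^{-1}$ slack in the statement is retained for convenience when composing with the downstream $\eps$-dependent box-simplex routines.

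I do not expect a genuine obstacle: essentially all non-trivial work has been done in \cite{2022sssp} and \cite{schneider2023near}. The one point requiring care is that the Eulerian Orientation oracle is invoked on the \emph{minor networks} produced during the Minor-Aggregation computation, not on the input graph $G$ directly; but this is already absorbed into the simulation theorems (Theorem B.9 of \cite[arXiv v2 version]{2022sssp} and its analogues), which execute Minor-Aggregation rounds faithfully on minors with only polylogarithmic overhead. Modulo this bookkeeping, the theorem is a direct composition of previously established results.
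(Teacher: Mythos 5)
Your proposal matches the paper's own reasoning: the paper likewise derives this theorem by combining Lemmas~3.14 and~3.15 of \cite{2022sssp} into a $\tilde{O}(1)$-round Minor-Aggregation algorithm relying on an Eulerian Orientation oracle, instantiating that oracle via \cite{atallah_vishkin1984euler_pram}, Theorem~6.1 of \cite{2022sssp}, and Lemma~10 of \cite{schneider2023near}, and then applying the simulation facts of \Cref{sec:minor-aggregation}. Your observation that the $\eps^{-1}$ factor in the stated bounds is slack (since the distance-structure computation itself uses only $\tilde{O}(1)$ Minor-Aggregation rounds) is also correct.
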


\subsubsection{Cost Approximator from Distance Structures}
We assume that we are given a distance structure for scale $D_i$ for every $i \in I_{scale}$. We also assume that the sparse neighborhood cover with diameter $D_0 = 1$ and covering radius $D_0/\tau < 1$ only consists of singleton clusterings. We next define a $\poly(\log n)$-approximate linear cost approximator based on the distance structure for every scale $D_i$.
To that end, we first start by introducing some additional notation before defining the cost approximator matrix $R$.
We denote by $\mathcal{S}_i$ the sparse neighborhood cover of the distance structure for scale $D_i$. We assume that each sparse neighborhood cover consists of exactly $NUM = O(\log n)$ clusterings. For $j \in [NUM]$, we denote by $\fC_{i,j}$ the $j$-th clustering of $\mathcal{S}_i$ and we also assume (without loss of generality) that each node in $V$ is clustered in $\fC_{i,j}$. For a node $v$, we denote by $C_{i,j}(v)$ the cluster
 in $\fC_{i,j}$ that contains $v$. Also, recall that each cluster $C \in \fC_{i,j}$ comes with a potential $\phi_{V \setminus C} \colon C \mapsto \mathbb{R}_{\geq 0}$. We define $\phi_{i,j} \colon V \mapsto \mathbb{R}_{\geq 0}$ with $\phi_{i,j}(v) = \phi_{V \setminus C_{i,j}(v)}(v)$ for every $v \in V$. For every $i \in I_{scale}$, $j \in [NUM]$ and $v \in V$, we define

\[p_{i,j}(v) := \max(0,\phi_{i,j}(v)/D_i - 0.25/\tau) \text{ and } w_i(v) = \sum_{j=1}^{NUM} p_{i,j}(v).\]

We are now ready to define the linear cost approximator. First, we index the rows in $R$ using the following set

\[ROWS(R) := \{(i,j,j',C) \colon i \in I_{scale} \setminus \{i_{max}\}, j,j' \in [NUM], C \in \fC_{i,j}\}.\]

That is, $R$ maps any node demand $d \in \mathbb{R}^V$ to a vector $Rd \in \mathbb{R}^{ROWS(R)}$.

Note that $|I_{scale}|,NUM \in O(\log n)$ and $|\fC_{i,j}| \leq n$. Thus, $|ROWS(R)| = O(n \log^3(n))$.

For any tuple $(i,j,j',C) \in ROWS(R)$ and $v \in V$, we define

\[R_{(i,j,j',C),v} :=\begin{cases}
  D_{i+1} \cdot\frac{p_{i,j}(v) \cdot p_{i+1,j'}(v)}{w_i(v)\cdot w_{i+1}(v)} &  \text{ if $v$ is in $C$ (so $C_{i,j}(v) = C$) and $C \subseteq C_{i+1,j'}(v)$} \\
  0 & \text{ otherwise}
\end{cases}.\]
The fact that $R$ is a $\poly(\log n)$-approximate linear cost approximator for transshipment follows by a straightforward adaptation of the proof in Section $4$ of \cite{2022sssp} that shows that their oblivious routing defines a $\poly(\log n)$-approximate linear cost approximator.

\begin{lemma}[$R$ is a $\poly(\log n)$-approximate cost approximator (See Section $4$ in the arXiv v2 version of \cite{2022sssp})]
\label{lem:R_is_cost_approximator}
The matrix $R \in \mathbb{R}^{ROWS(R) \times V}$ defines a $\poly(\log n)$-approximate linear cost approximator for transshipment.
\end{lemma}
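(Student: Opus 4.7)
The plan is to closely mirror Section~4 of \cite{2022sssp}, which proves the analogous cost approximator bounds for their oblivious routing matrix $R_{obv} = R_1 R_2$. Morally, our $R$ replaces the edge-expansion matrix $R_1$ with a diagonal scaling by the scale parameter $D_{i+1}$, re-expressed directly on the cluster hierarchy rather than through an intermediate path decomposition. The task is to verify that both inequalities $\OPT(d) \le \norm{Rd}_1$ and $\norm{Rd}_1 \le \poly(\log n) \cdot \OPT(d)$ survive this reformulation; the critical structural point is that the $D_{i+1}$ factor built into each row entry correctly captures the cost of moving a unit of flow between cluster scales $i$ and $i+1$.

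For the upper bound $\norm{Rd}_1 \le \poly(\log n) \cdot \OPT(d)$, I would first decompose an optimal routing of $d$ into its constituent paths. By linearity and the triangle inequality, it then suffices to show the single-commodity bound $\norm{R(1_u - 1_v)}_1 \le \poly(\log n) \cdot d_G(u,v)$. For that bound, the $O(\log n)$ scales split into two regimes. At large scales $i$ with $D_i / \tau \gtrsim d_G(u,v)$, the sparse neighborhood cover guarantees that for some clustering $j$ both $u$ and $v$ lie deep inside the same cluster $C$, so by Property~2 of \Cref{def:distance_structure} the potentials $\phi_{V \setminus C}(u)$ and $\phi_{V \setminus C}(v)$ differ by at most $d_G(u,v)$, making the contributions of $u$ and $v$ to rows $(i,j,j',C)$ cancel up to a small residual. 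At smaller scales, each nonzero entry is at most $D_{i+1}$ times a probability-like quantity bounded by $1$, so the total mass across the $O(\log^2 n)$ choices of $(j, j')$ is $O(D_{i+1} \log^2 n)$; summing the geometric series over the $O(\log_\beta d_G(u,v))$ active scales yields the claim.

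For the lower bound $\OPT(d) \le \norm{Rd}_1$, the plan is to construct a linear operator $S: \R^{ROWS(R)} \to \R^E$ satisfying (i) $B(SR)d = d$ for every feasible demand $d$, and (ii) $\norm{W S x}_1 \le \norm{x}_1$ for every $x$. Together these give $\OPT(d) \le \norm{W(SR)d}_1 \le \norm{Rd}_1$. The operator $S$ is defined scale by scale: for each row $(i,j,j',C)$, fix an internal path inside the cluster $C_{i+1, j'}$ containing $C$ (which has diameter $D_{i+1}$) connecting designated representatives of $C$ and of $C_{i+1,j'}$, and route $x_{(i,j,j',C)}/D_{i+1}$ units of flow along it. Property~(ii) is then immediate, since each unit of $|x_{(i,j,j',C)}|$ contributes at most $D_{i+1} \cdot (1/D_{i+1}) = 1$ to $\norm{WSx}_1$. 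Property~(i), the telescoping identity $B(SR) = \mathrm{Id}$ on the demand space, is the technical heart and relies on the fact that the hedging weights $p_{i,j}(v)/w_i(v)$ form a valid probability distribution over $j$ at every scale $i$ and node $v$, together with the zero-sum assumption $\sum_v d(v) = 0$ being absorbed at the coarsest scale $i_{max}$.

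The main obstacle I expect is property~(i) above: verifying that the scale-by-scale flows defined by $S$ telescope across levels to a single feasible routing of $d$. This hinges on Property~3 of \Cref{def:distance_structure} to ensure $w_i(v) > 0$ so that the hedging distribution is well-defined, and on careful bookkeeping of which cluster transitions each row $(i,j,j',C)$ represents (the constraint $C \subseteq C_{i+1,j'}(v)$ restricts attention to vertices $v$ whose scale-$(i+1)$ cluster fully contains the scale-$i$ cluster $C$, and the boundary terms where this fails must be absorbed into neighbouring rows). The analogous telescoping for the full oblivious routing is carried out in Section~4 of \cite{2022sssp}, and I expect the adaptation here to be essentially routine modulo these bookkeeping details.
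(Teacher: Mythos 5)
Your proposal follows the same overall plan the paper has in mind: the paper itself omits a formal proof and only explains how $R$ over-approximates the oblivious routing of \cite{2022sssp}, deferring to a ``straightforward adaptation'' of that paper's Section 4. Your lower-bound argument (constructing a path-routing operator $S$ with $B(SR)d = d$ and $\norm{WSx}_1 \le \norm{x}_1$) is precisely the formalization of the paper's remark that $\norm{Rd}_1$ dominates the cost of the oblivious routing $R_{\mathrm{obv}}d$, since your $SR$ is exactly $R_{\mathrm{obv}}$; and your upper bound via reduction to single-edge demands and the Lipschitz behaviour of the row entries is the adaptation of \cite{2022sssp}'s cost analysis that the paper alludes to. So the approach is a match.

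One place in the upper bound where you should be more careful: at large scales $i$ you only discuss the clustering $j$ for which the sparse neighborhood cover puts $u$ and $v$ deep inside a common cluster, but the bound $\norm{R(1_u-1_v)}_1 \le \poly(\log n)\, d_G(u,v)$ requires controlling \emph{every} pair $(j,j')$ at that scale. For the other clusterings there are two sub-cases. Either $u$ and $v$ nevertheless fall into the same cluster of $\fC_{i,j}$ (and of $\fC_{i+1,j'}$), in which case the Lipschitz estimates on $p_{i,j},\, p_{i+1,j'},\, w_i,\, w_{i+1}$ (they change by $O(d_G(u,v)/D_i)$ and are bounded below by $\Omega(1/\tau)$) give a per-row difference of $O(\poly(\log n)\cdot d_G(u,v))$; or they fall into different clusters, in which case, writing $C = C_{i,j}(v)$, Property~1 of \Cref{def:distance_structure} forces $\phi_{V\setminus C}(v) \le d_G(V\setminus C, v) \le d_G(u,v)$, so $p_{i,j}(v) \le d_G(u,v)/D_i$, and the whole entry $R_{(i,j,j',C),v}$ (with the $u$-entry zero) is already $O(\beta \tau\, d_G(u,v)) = O(\poly(\log n)\, d_G(u,v))$. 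Incorporating that second sub-case closes the gap; with it the geometric summation over scales goes through and the overall bound follows.
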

For the sake of brevity, we omit a formal proof and instead explain how our cost approximator relates to the oblivious routing defined in \cite{2022sssp}.
The oblivious routing of \cite{2022sssp} defines a flow $f_v$ for every node $v \in V$ (the column corresponding to $v$). The flow $f_v$ can be decomposed into $f_v = \sum_{i,j,j'} f_{v,(i,j,j')}$ where $f_{v,(i,j,j')}$ is a flow that sends $\frac{p_{i,j}(v) \cdot p_{i+1,j'}(v)}{w_i(v)\cdot w_{i+1}(v)}$ units of flow from the cluster center of $C_{i,j}(v)$ to the cluster center of $C_{i+1,j'}(v)$ along a path of length at most $D_{i+1}$. In particular, \cite{2022sssp} assumes that the cluster $C_{i+1,j'}(v)$ comes with a spanning tree of diameter at most $D_{i+1}$ and the flow path is the unique path connecting the cluster center of $C_{i,j}(v)$ with the cluster center of $C_{i+1,j'}(v)$. Note that this path only exists if the cluster center of $C_{i,j}(v)$ is actually contained in $C_{i+1,j'}(v)$. Indeed, $f_{v,(i,j,j')}$ only sends a non-zero amount of flow if $p_{i+1,j'}(v) > 0$, in which case the even stronger property $C_{i,j}(v) \subseteq C_{i+1,j'}(v)$ holds. Our cost approximator $R$ overapproximates the cost incurred by $f_{v,(i,j,j')}$ sending $\frac{p_{i,j}(v) \cdot p_{i+1,j'}(v)}{w_i(v)\cdot w_{i+1}(v)}$ units of flow along a path of length at most $D_{i+1}$ by setting $R_{(i,j,j',C_{i,j}(v)),v} = D_{i+1} \cdot \frac{p_{i,j}(v) \cdot p_{i+1,j'}(v)}{w_i(v)\cdot w_{i+1}(v)}$. It also overapproximates the cost of the flow by omitting certain cancellations (for example if $f_{v,(i_1,j_1,j_1')}$ and $f_{v,(i_2,j_2,j_2')}$ send a flow in opposite directions along a given edge). However, the analysis in Section 4 of \cite{2022sssp} does not rely on those cancellations and thus \cref{lem:R_is_cost_approximator} indeed follows by a straightforward adaptation of the analysis given in \cite{2022sssp}.

\subsection{Distributed Implementation}
\label{sec:distributed_implementation}

It remains to show that we can efficiently implement the box-simplex game in the Minor-Aggregation model using the column-sparse cost approximator $R$ defined in the previous section.
\subsubsection{Computing matrices $R$ and $A := RBW^{-1}$}
\begin{claim}[Computing $R$ and $A := RBW^{-1}$, given distance structures]
\label{claim:computingRandA}
Assume one is given access to the distance structures defining the cost approximator $R$. Then, in $\tilde{O}(1)$ Minor-Aggregation rounds each node $v$ can compute all the $\tilde{O}(1)$ non-zero entries in the column of $R$ corresponding to $v$ and each edge $e$ can compute all the $\tilde{O}(1)$ non-zero entries in the column of $A := RBW^{-1}$ corresponding to $e$.
\end{claim}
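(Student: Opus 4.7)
The plan is to separate what each node can deduce locally from what requires communication. Given access to the distance structures, each $v$ already knows its potentials $\phi_{i,j}(v)$, so it can locally compute every $p_{i,j}(v) = \max(0, \phi_{i,j}(v)/D_i - 0.25/\tau)$ and $w_i(v) = \sum_j p_{i,j}(v)$. The nonzero entries of the column of $R$ at $v$ are indexed by triples $(i,j,j')$ with $i \in I_{scale}\setminus\{i_{max}\}$ and $j,j' \in [NUM]$ for which $C_{i,j}(v)\subseteq C_{i+1,j'}(v)$, so there are at most $O(\log^3 n)$ nonzero entries and the only missing piece of information is the Boolean subset flag for each triple.

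To compute these flags I would fix a triple $(i,j,j')$ and use a single Minor-Aggregation round. In the contraction step each edge $e$ sets $c_e = \top$ iff its two endpoints share the same cluster ID in $\fC_{i,j}$, which contracts every $C \in \fC_{i,j}$ into one supernode (each cluster is connected by the diameter bound of \Cref{def:distance_structure}). In the consensus step each node submits its $\fC_{i+1,j'}$ cluster ID under the aggregation $\oplus$ that returns the pair $(\min, \max)$, so every node in $C$ learns $(\min_{u \in C}\mathrm{id}(C_{i+1,j'}(u)), \max_{u \in C}\mathrm{id}(C_{i+1,j'}(u)))$. Since the clusterings are partitions, $C \subseteq C_{i+1,j'}(v)$ is equivalent to these two values being equal. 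Running this for all $O(\log^3 n)$ triples, each $v$ assembles its column of $R$, pairing values $D_{i+1}\cdot p_{i,j}(v)p_{i+1,j'}(v)/(w_i(v)w_{i+1}(v))$ with row indices $(i,j,j',C_{i,j}(v))$.

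For the columns of $A := RBW^{-1}$, observe that for $e = (u,w)$ the column equals $(R_{:,u} - R_{:,w})/w(e)$, whose support is contained in the union of the $O(\log^3 n)$ nonzero rows at $u$ and $w$. I would use $O(\log^3 n)$ further Minor-Aggregation rounds with no contraction, so supernodes are singletons: in the $k$-th round node $u$ publishes its $k$-th column entry (row index and value, both $\tilde{O}(1)$ bits) as $x_u$, and edge $e$ reads $y_u = x_u$ in the aggregation step; the same happens for $w$ in parallel. Once all entries have been delivered, the edge subtracts and divides by $w(e)$ locally to obtain its column of $A$.

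The main obstacle is the subset test $C_{i,j}(v)\subseteq C_{i+1,j'}(v)$, which is inherently non-local inside the cluster $C_{i,j}(v)$ and cannot be resolved from $v$'s local view alone. The key observation that makes the construction efficient is that this test reduces to checking whether all nodes in $C_{i,j}(v)$ agree on their $\fC_{i+1,j'}$ cluster identifier — precisely what cluster contraction combined with a $(\min,\max)$ consensus accomplishes in one Minor-Aggregation round. Since both the number of triples and the per-column sparsity of $R$ are $O(\log^3 n)$, the overall cost of constructing the columns of $R$ and disseminating them to incident edges to form the columns of $A$ is $\tilde{O}(1)$ Minor-Aggregation rounds, as claimed.
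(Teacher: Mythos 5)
Your proof is correct and follows essentially the same route as the paper: locally compute $p_{i,j}(v)$ and $w_i(v)$, reduce the remaining obstacle to the subset test $C_{i,j}(v)\subseteq C_{i+1,j'}(v)$, resolve it via cluster contraction and a consensus over $\fC_{i+1,j'}$-identifiers (using that each $C\in\fC_{i,j}$ is connected), and then have each edge learn the two adjacent columns of $R$ to form its column of $A$. You are merely more explicit than the paper about the $(\min,\max)$ aggregation and the dissemination to edges, but the ideas are identical.
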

\begin{proof}
We first discuss how each node $v$ can efficiently learn the non-zero entries in the column of $R$ corresponding to $v$. First, each node $v$ can locally compute $p_{i,j}(v)$ and $w_{i}(v)$ for every $i \in I_{scale}$ and $j \in [NUM]$. Thus, the only obstacle for computing its non-zero entries is for $v$ to learn whether the cluster $C_{i,j}(v)$ is fully contained in the cluster $C_{i+1,j'}(v)$ for all $i \in I_{scale} \setminus \{i_{max}\}$ and $j,j' \in [NUM]$. For a given $i,j,j'$, each node $v$  can learn whether the cluster $C_{i,j}(v)$ is fully contained in the cluster $C_{i+1,j'}(v)$ in $O(1)$ Minor-Aggregation rounds, using the fact that the cluster $C_{i,j}(v)$ is connected. In particular, for each node $u$, let $x_u$ denote the unique identifier of the cluster $C_{i+1,j'}(u)$. Now, in $O(1)$ Minor-Aggregation rounds, we can learn whether the $x_u$'s are equal for all $u \in C_{i,j}(v)$, and therefore whether $C_{i,j}(v) \subseteq C_{i+1,j'}(v)$. As there are $O(\log^3 n)$ choices for $(i,j,j')$, each node $v$ can indeed compute all the $O(\log^3 n)$ non-zero entries in the column of $R$ corresponding to $v$. Thus, it remains to discuss how each edge $e \in E$ can compute all the $\tilde{O}(1)$ non-zero entries in the column of $A := RBW^{-1}$ corresponding to $e$. This is immediate, as it suffices for each edge $e = \{u,v\}$ to learn the $\tilde{O}(1)$ non-zero entries in $R$ in the two columns corresponding to $u$ and $v$, which can trivially be done in $\tilde{O}(1)$ Minor-Aggregation rounds.
\end{proof}

\subsubsection{Evaluating Matrix-Vector Products}
We next discuss how to efficiently evaluate all the matrix-vector products required for the box-simplex game. A vector $x \in \mathbb{R}^{ROWS(R)}$ is stored distributedly if each node $v \in V$ knows the entry for each $(i,j,j',C) \in ROWS(R)$ with $v \in C$.
\begin{claim}[Matrix-Vector Products]
    \label{claim:matrix_vector_products}
    Assume one is given access to the distance structures defining the cost approximator $R$. Then, in $\tilde{O}(1)$ Minor-Aggregation rounds one can compute matrix-vector products with $R, R^T, A, A^T, |A|$ and $|A^T|$.
\end{claim}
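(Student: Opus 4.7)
The plan is to build all six matrix-vector products from three primitives, each realizable in $\tilde{O}(1)$ Minor-Aggregation rounds: (i) cluster-wise sums, implemented by contracting the clusters of some $\fC_{i,j}$ into supernodes (each cluster is connected because it has bounded $d_{G[C]}$-diameter, so its internal edges can be contracted) and then using the consensus and aggregation steps; (ii) purely local computations at each node, using the column representation of $R$ from Claim~\ref{claim:computingRandA} together with the distributed storage convention for vectors in $\R^{ROWS(R)}$; and (iii) single-hop edge-to-endpoint or endpoint-to-edge exchanges implemented by one aggregation step with no contraction. By iterating over the $|I_{scale}|\cdot NUM = \tilde{O}(1)$ pairs $(i,j)$ and packing the $NUM = \tilde{O}(1)$ values indexed by $j'$ into each $\tilde{O}(1)$-bit message, the total round count stays $\tilde{O}(1)$.

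For $Rx$ with $x \in \R^V$, for each $(i,j)$ I would contract $\fC_{i,j}$ and have every node $v$ feed $(R_{(i,j,j',C_{i,j}(v)),v}\cdot x_v)_{j' \in [NUM]}$ into the consensus sum; the resulting supernode value is exactly $((Rx)_{(i,j,j',C)})_{j'}$, and every $v \in C$ receives it, matching the distributed storage convention. For $R^T y$ no communication is needed at all: $v$ already holds its own column of $R$ and the $\tilde{O}(1)$ values $y_{(i,j,j',C_{i,j}(v))}$, so it can form the dot product locally. I would then reduce $Ax$ and $A^T y$ to the $R$ primitives via $A = RBW^{-1}$: for $Ax$, one no-contraction aggregation round lets every node compute $BW^{-1}x \in \R^V$ (each edge broadcasts $\pm x_e/w_G(e)$ to its endpoints; each node sums with orientation-respecting signs), after which I invoke the $Rx$ procedure. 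For $A^T y$, I would first form $R^T y$ locally and then, in one aggregation round, each edge $e=(u,v)$ reads $(R^T y)_u$ and $(R^T y)_v$ from its endpoints and outputs $((R^T y)_u-(R^T y)_v)/w_G(e)$.

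The step I expect to require the most care is $|A|x$ (and the symmetric $|A|^T y$; note $|A^T| = |A|^T$ since absolute value is entrywise), because the $RBW^{-1}$ factorization breaks under entrywise absolute value and I must instead use the explicit column representation of $A$ furnished by Claim~\ref{claim:computingRandA}. For $|A|x$ I would iterate over $(i,j)$: contract $\fC_{i,j}$, have each crossing edge $e$ contribute $(|A_{(i,j,j',C),e}|\cdot x_e)_{j'}$ into the aggregation step, and---this is the main obstacle---route the contributions of internal edges of $C$, which are silently erased because they collapse to self-loops after contraction, through the consensus step by assigning each internal edge $\{u,v\} \subseteq C$ to its lower-ID endpoint so as to avoid double counting. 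Adding the consensus value and the aggregate value then yields $(|A|x)_{(i,j,j',C)}$, learned by every node of $C$. For $|A|^T y$ I would run the mirrored process: for each $(i,j,j')$ every node broadcasts its owned $y_{(i,j,j',C_{i,j}(v))}$ to its incident edges in one no-contraction aggregation round, and after $\tilde{O}(1)$ such rounds each edge knows all $\tilde{O}(1)$ $y$-entries hitting its column of $|A|$ and completes the dot product locally. Apart from the self-loop tie-breaking in $|A|x$, every step reduces to routine Minor-Aggregation bookkeeping.
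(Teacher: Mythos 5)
Your proof is correct and follows the same high-level approach as the paper: use \Cref{claim:computingRandA} so each node/edge knows its column of $R$/$A$, compute the node- and edge-indexed products $R^Ty$, $A^Ty$, $|A^T|y$ essentially locally (the paper calls them ``local dot products,'' though as you note $A^Ty$ and $|A^T|y$ actually need one no-contraction round so the edge can fetch the relevant entries of $y$ or of $R^Ty$ from its endpoints, which is harmless), and compute the $ROWS(R)$-indexed products by iterating over the $\tilde{O}(1)$ pairs $(i,j)$, contracting $\fC_{i,j}$, and summing over each cluster. The paper dismisses $Ax$ and $|A|x$ with ``in a similar vein,'' and the genuine value you add is spelling out the one place this is not routine: for $|A|x$ the factorization $A = RBW^{-1}$ is unavailable, so $(|A|x)_{(i,j,j',C)} = \sum_e |A_{(i,j,j',C),e}|x_e$ must be summed over all edges touching $C$, but after contracting $\fC_{i,j}$ the internal edges of $C$ become self-loops and are dropped from the aggregation step, so their contributions would silently vanish. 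Your repair---a preliminary no-contraction round pushing $|A_{(i,j,j',C),e}|x_e$ from each internal edge to a canonical endpoint (e.g. lower ID), then feeding those into the consensus sum while crossing edges feed the aggregation step---is correct and stays within the $\tilde{O}(1)$ budget. A marginally simpler variant assigns \emph{every} edge (crossing or internal) to a canonical endpoint inside $C$ and runs the entire sum through the consensus step alone, dispensing with the crossing/internal case split.
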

\begin{proof}
Using \Cref{claim:computingRandA}, we can assume that each node $v$ can compute all the $\tilde{O}(1)$ non-zero entries in the column of $R$ corresponding to $v$ and each edge $e$ can compute all the $\tilde{O}(1)$ non-zero entries in the column of $A := RBW^{-1}$ corresponding to $e$. With this information, given a vector $y \in \mathbb{R}^{ROWS(R)}$, computing $R^Ty \in \mathbb{R}^V$, $A^Ty \in \mathbb{R}^E$ and $|A^T|y \in \mathbb{R}^E$ simply boils down to locally computing dot-products, which can be done without any further communication round. We next discuss how to compute $Rx \in \mathbb{R}^{ROWS(R)}$ for a given vector $x \in \mathbb{R}^{V}$. Fix any $i \in I_{scale} \setminus \{i_{max}\}, j,j' \in [NUM]$. Using the fact that each cluster in $\fC_{i,j}$ is connected and $R_{(i,j,j',C),v}$ can only be non-zero if $v \in C$, we can compute in $O(1)$ Minor-Aggregation rounds $(Rx)_{(i,j,j',C)}$ for every $C \in \fC_{i,j}$, as $\sum_{u \in C} R_{(i,j,j',C),u}x_u$ can be computed using a single aggregation. Thus, we can indeed compute $Rx$ in $\tilde{O}(1)$ Minor-Aggregation rounds, using the fact that there are $O(\log^3 n)$ choice for $(i,j,j')$. In a similar vein, given $x \in \mathbb{R}^E$, we can compute $Ax$ and $|A|x$ in $\tilde{O}(1)$ Minor-Aggregation rounds.
\end{proof}

\subsubsection{$(1+\eps)$-Approximate
Primal and Dual Transshipment Solutions given Distance Structures, and Proof of our Main Theorem}

As we are able to efficiently evaluate the various matrix-vector products, we can use the box-simplex framework to efficiently solve $(1+\eps)$-transshipment with an improved $\eps$-dependency of $\eps^{-1}$.

\begin{lemma}[Distributed Box-Simplex for Transshipment]
\label{lem:distributed_box_simplex}
 Assume one is given access to the distance structures defining the cost approximator $R$ and a feasible transshipment demand $d$. Then, in $\tilde{O}(1/\eps)$ Minor-Aggregation rounds one can compute feasible primal and dual transshipment solutions that are $(1+\eps)$-approximate.
\end{lemma}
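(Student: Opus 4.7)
The plan is to invoke Proposition \ref{prob:box_simplex_transshipment} with the column-sparse $\poly(\log n)$-approximate cost approximator $R$ from Lemma \ref{lem:R_is_cost_approximator}. This reduces the task to executing $\tildeO(\eps^{-1})$ sequential operations, each belonging either to classes (O1)--(O4) from Proposition \ref{prop:ops-from-boxsimplex} (with the substitution $A := RBW^{-1}$) or to the class of matrix-vector products with $R, R^T, B, B^T, W, W^{-1}$. It then suffices to implement each individual operation in $\tildeO(1)$ Minor-Aggregation rounds, and stitch $\tildeO(\eps^{-1})$ such implementations together sequentially.

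First I would fix a distributed storage convention for the intermediate vectors. Vectors in $\R^V$ are stored with each node holding its own entry; vectors in $\R^E$ are stored with each edge holding its own entry; vectors in $\R^{ROWS(R)}$ follow the convention of Claim \ref{claim:matrix_vector_products}, where each node $v$ knows the entries indexed by all tuples $(i,j,j',C)$ with $v \in C$ (so each coordinate is held consistently by every member of its defining cluster). Under this convention, operations of type (O1) and (O2) as well as the elementwise parts of (O3) are purely local, costing no communication. A dot product between two vectors of the same type reduces to computing a global sum of locally known quantities, which is one consensus step with the sum aggregator applied in the trivial minor where no edges are contracted; for $\R^E$ vectors, each edge first sends its contribution to (say) both endpoints split in half, and for $\R^{ROWS(R)}$ vectors I would precompute, in a single $\tildeO(1)$-round preamble, a canonical owner inside each cluster (e.g.\ the minimum-ID node in $C$) so that each coordinate is counted exactly once.

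Next I would dispatch the matrix-vector products. Products with $R, R^T, A, A^T, |A|, |A|^T$ are directly furnished by Claim \ref{claim:matrix_vector_products}, each in $\tildeO(1)$ Minor-Aggregation rounds. The remaining products with $B, B^T, W, W^{-1}$, which are needed for post-processing (e.g.\ passing between congestion form $Wf$ and flow form $f$, and between an edge flow $f$ and its induced demand $Bf$), are standard and edge-local: $W, W^{-1}$ are diagonal and applied edge-locally; $B^T$ maps $\R^V \to \R^E$ by having every edge read the values at its two endpoints in one Minor-Aggregation round; and $B$ maps $\R^E \to \R^V$ in a single aggregation step summing incident edge contributions at each node. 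Multiplying the $\tildeO(\eps^{-1})$ step count from Proposition \ref{prob:box_simplex_transshipment} by the $\tildeO(1)$ rounds per step gives the claimed $\tildeO(\eps^{-1})$ Minor-Aggregation bound, and the primal/dual feasibility and $(1+\eps)$-approximation of the returned solutions come directly from the guarantee of Proposition \ref{prob:box_simplex_transshipment}.

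The main obstacle I anticipate is not the round count itself but verifying once, up front, that the redundantly stored $\R^{ROWS(R)}$ vectors remain consistent (the same numerical value at every node of a cluster $C$) across all operations of the box-simplex iteration. For matrix-vector outputs this is already built into the proof of Claim \ref{claim:matrix_vector_products} (the single in-cluster aggregation produces a value that is then shared by all members of $C$), elementwise operations preserve consistency whenever their inputs are consistent, and the canonical-owner trick handles dot products without breaking symmetry. I would state this invariant explicitly and check it for each operation class, after which chaining $\tildeO(\eps^{-1})$ such operations is immediate.
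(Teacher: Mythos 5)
The proposal is correct and follows essentially the same route as the paper, whose proof is the one-line ``directly follows by combining \Cref{claim:matrix_vector_products}, \Cref{lem:R_is_cost_approximator} and \Cref{prob:box_simplex_transshipment}.'' You spell out what the paper leaves implicit — the distributed storage conventions, the fact that products with $B, B^T, W, W^{-1}$ (which \Cref{claim:matrix_vector_products} does not explicitly list) are edge-local or trivial aggregations, and the consistency invariant for the cluster-redundant $\R^{ROWS(R)}$ storage — but the underlying argument is identical.
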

\begin{proof}
This directly follows by combining \Cref{claim:matrix_vector_products}, \Cref{lem:R_is_cost_approximator} and \Cref{prob:box_simplex_transshipment}.
\end{proof}

We are now ready to prove our main theorem.

\begin{proof}[Proof of \Cref{thm:transshipment-distrib}]
We first compute a distance structure for scale $D_i$ for every $i \in I_{scale}$ using the PRAM/CONGEST/HYBRID algorithms of \Cref{thm:distance_structure_distributed}. Then, given the distance structures, \Cref{lem:distributed_box_simplex} states that we can compute $(1+\eps)$-approximate primal and dual transshipment solutions via the box-simplex game in $\tilde{O}(1/\eps)$ Minor-Aggregation rounds. The theorem then follows by the Minor-Agggregation simulation results stated in \Cref{sec:minor-aggregation}.
\end{proof}

\bibliographystyle{alpha}
\bibliography{refs}

\end{document}